\theoremstyle{definition}
\newtheorem{theorem}{Theorem}
\newtheorem{lemma}[theorem]{Lemma}
\newtheorem{proposition}[theorem]{Proposition}
\newtheorem{corollary}[theorem]{Corollary}
\newtheorem{example}[theorem]{Example}
\newtheorem{definition}[theorem]{Definition}
\newcommand{\LA}{\text{\upshape{LA}}}
\newcommand{\A}{\mathcal{A}}
\newcommand{\B}{\mathcal{B}}
\newcommand{\C}{\mathcal{C}}
\newcommand{\I}{\mathcal{I}}
\renewcommand{\O}{\mathcal{O}}
\renewcommand{\P}{\mathcal{P}}
\newcommand{\R}{\mathcal{R}}
\renewcommand{\S}{\mathcal{S}}
\renewcommand{\t}{\mathfrak{t}}
\newcommand{\Real}{\mathbb{R}}
\newcommand{\Nat}{\mathbb{N}}
\newcommand{\equals}{=}
\newcommand{\<}{\langle}
\renewcommand{\>}{\rangle}
\newcommand{\ldbrace}{\mbox{$(\!\cdot$}}
\newcommand{\rdbrace}{\mbox{$\cdot\!)$}}
\newcommand{\vars}{\text{\upshape{vars}}}
\newcommand{\consts}{\text{\upshape{consts}}}
\newcommand{\bconsts}{\text{\upshape{bconsts}}}
\newcommand{\fconsts}{\text{\upshape{fconsts}}}
\newcommand{\aconsts}{\text{\upshape{$\alpha$consts}}}
\newcommand{\len}{\text{len}}
\newcommand{\At}{\text{At}}
\newcommand{\hAt}{\widehat{\At}}
\newcommand{\con}{\leftrightharpoons}
\newcommand{\APC}{\text{\upshape{ap}}}
\newcommand{\Ax}{\text{\upshape{Ax}}}
\newcommand{\subst}[2]{\bigl[#1\!\bigm/\!#2\bigr]}
\begin{document}

\title{Bernays-Sch\"onfinkel-Ramsey with Simple Bounds is NEXPTIME-complete}

\author{
	\begin{tabular}{l}
		Marco Voigt\\
		\small\textit{Max Planck Institute for Informatics, Saarland Informatics Campus, Saarbr\"ucken, Germany,}\\
		\small\textit{Saarbr\"ucken Graduate School of Computer Science}
	\end{tabular}
	\and
	\begin{tabular}{l}
		Christoph Weidenbach \\
		\small\textit{Max Planck Institute for Informatics, Saarland Informatics Campus, Saarbr\"ucken, Germany}
	\end{tabular}
}	
\date{June 2015}
\maketitle

\begin{abstract}
	First-order predicate logic extended with linear arithmetic is undecidable, in general. We show that the 
	Bernays-Sch\"onfinkel-Ramsey (BSR) fragment extended with
	linear arithmetic restricted  to simple bounds (SB)  
	is decidable through finite ground instantiation.
	The identified ground instances can be employed to restrict the search space of 
	existing automated reasoning procedures for BSR(SB).
	Satisfiability of BSR(SB) compared to BSR remains NEXPTIME-complete. 
	The decidability result is almost tight because BSR is undecidable if extended with linear difference inequations, 
	simple additive inequations, quotient inequations and multiplicative inequations.
\end{abstract}

%%%%%%%%%%%%%%%%%%%%%%%%%%%%%%%%%%%%%%%%%%%%%%%%%%%%
%%%%%%%%%%%%%%%%%%%%%%%%%%%%%%%%%%%%%%%%%%%%%%%%%%%%
%%%%%%%%%%%%%%%%%%%%%%%%%%%%%%%%%%%%%%%%%%%%%%%%%%%%
%%%%%%%%%%%%%%%%%%%%%%%%%%%%%%%%%%%%%%%%%%%%%%%%%%%%

\section{Introduction} \label{section:Introduction}

The Bernays-Sch\"onfinkel-Ramsey (BSR) fragment comprises exactly the first-order logic prenex sentences 
with the $\exists^*\forall^*$ quantifier prefix resulting in a CNF where all occurring function symbols are constants. 
Formulas may contain equality.
Satisfiability of the BSR fragment is decidable and NEXPTIME-complete~\cite{Lewis1980}.
Its extension with linear arithmetic is undecidable~\cite{Halpern1991,CoxMT92,Fietzke2012}. 
Our first contribution is refinements to these results where the arithmetic constraints
are further restricted to linear difference inequations $x-y \triangleleft c$, 
simple additive inequations $x+y \triangleleft c$, 
quotient inequations $x \triangleleft c\cdot y$ and 
multiplicative inequations $x \cdot y \triangleleft c$ where $c\in\Real$, and $\triangleleft\in\{<, \leq, =,\neq,\geq, >\}$. 
Under all these restrictions, the combination
remains undecidable, respectively (see Section~\ref{section:UndecidableFragments}). 

On the positive side, we prove decidability of the restriction to arithmetic constraints consisting of 
simple bounds of the form $x\triangleleft c$, where $\triangleleft$ and $c$ are as above. Underlying
the result is the observation that similar to the finite model property of BSR, only finitely many instances of universally quantified clauses with arithmetic simple bounds constraints need to be considered. Our construction is motivated by 
results from quantifier elimination~\cite{Loos1993} and hierarchic superposition~\cite{Bachmair1994b,Kruglov2012,Fietzke2012}. 
For example, consider the following two clauses (we denote clauses in implication form with the arithmetic part syntactically separated from the free part; $\wedge$ and $\vee$ bind stronger than $\rightarrow$)
\begin{center}
$\begin{array}{r@{\;}r@{\;\;\rightarrow\;\;}l}
x_2\neq 5 			&\wedge\; R(x_1)  	& Q(u_1,x_2)\\
y_1<7 \wedge y_2\leq 2	&	  		& Q(c,y_2) \vee R(y_1)
\end{array}$
\end{center}
where the variable $u_1$ shall range over a freely selectable domain, $x_i$, $y_i$ are variables over the reals, and the constant
$c$ addresses an element of the same domain that $u_1$ ranges over. All occurring variables shall be implicitly universally quantified.
Our main result reveals that this clause set is satisfiable if and only if a finite set of ground instances is satisfiable in which
(i) $u_1$ has been instantiated with the constant $c$ (Lemma~\ref{lemma:EquisatisfiabilityFreeInstantiation}), 
(ii) $x_2$ and $y_2$ have been instantiated with the abstract real values
$5+\varepsilon$ and $-\infty$ (Definitions~\ref{definition:BaseInstantiationPoints},~\ref{definition:BaseVariableInstantiationPoints}, Lemma~\ref{lemma:EquisatisfiabilityBaseInstantiation}), and
(iii) $x_1$ and $y_1$ have been instantiated with $-\infty$. 
The instantiation does not need to consider the simple bounds $y_1<7$, $y_2\leq 2$, because it is sufficient to explore the reals either from
$-\infty$ upwards -- in this case upper bounds on real-valued variables can be ignored -- or from $+\infty$ downwards (ignoring lower bounds), as is similarly done in linear quantifier elimination~\cite{Loos1993}. Also instantiation does not need
to consider the value $5+\varepsilon$ for $x_1$ and $y_1$, motivated by the fact that the arguments of $R$ are not affected by the constraint $x_2 \neq 5$ (Example~\ref{example:ConnectedArgumentPositions}, Definition~\ref{definition:ConnectedArgumentPositions}).

All abstract values are represented by Skolem constants over the reals, together with defining axioms. 
For the example, we introduce the fresh Skolem constants $\alpha_{-\infty}$ to represent $-\infty$ (a ``sufficiently small'' value) and $\alpha_{5+\varepsilon}$ to represent $5+\varepsilon$ (a value ``slightly larger'' than $5$)
together with axioms expressing $\alpha_{-\infty} < 2 < 5 < \alpha_{5+\varepsilon}$.
Eventually, we obtain the ground clause set 
\begin{center}
$\begin{array}{r@{\;}r@{\;\;\rightarrow\;\;}l}
\alpha_{5+\varepsilon}\neq 5 	& \wedge\; R(\alpha_{-\infty}) & Q(c,\alpha_{5+\varepsilon})\\
\alpha_{-\infty}\neq 5 & \wedge\; R(\alpha_{-\infty}) & Q(c,\alpha_{-\infty})\\
\alpha_{-\infty}<7 \wedge \alpha_{5+\varepsilon}\leq 2 & & Q(c,\alpha_{5+\varepsilon}) \vee R(\alpha_{-\infty})\\
\alpha_{-\infty}<7 \wedge \alpha_{-\infty}\leq 2 & & Q(c,\alpha_{-\infty}) \vee R(\alpha_{-\infty})\\
\multicolumn{3}{c}{\alpha_{-\infty} < 2}\\
\multicolumn{3}{c}{5 < \alpha_{5+\varepsilon}}
\end{array}$
\end{center}
which has the model
$\alpha_{-\infty}^{\A} = 1$, $\alpha_{5+\varepsilon}^{\A} = 6$, $R^{\A} = \{1\}$, $Q^{\A} = \{(c,6), (c,1)\}$, for instance.

Using the result that every BSR clause set with simple bounds has a satisfiability-preserving ground instantiation
with respect to finitely many constants, we prove NEXP\-TIME-completeness of the fragment in Section~\ref{section:Complexity}.
The decidability result on simple bounds can be lifted to constraints of the form $x \triangleleft s$ where $x$ is the only
variable and $s$ a ground expression. The lifting is done by the introduction of further Skolem constants for complicated ground terms,
following ideas of~\cite{Kruglov2012} and presented in Section~\ref{section:Basification}. The paper ends with a conclusion, Section~\ref{sec:conclusion},
where we, in particular, discuss the impact of our results on automated reasoning procedures for the fragment and relevant areas of application.
The technical parts are moved to an appendix.

%%%%%%%%%%%%%%%%%%%%%%%%%%%%%%%%%%%%%%%%%%%%%%%%%%%%
%%%%%%%%%%%%%%%%%%%%%%%%%%%%%%%%%%%%%%%%%%%%%%%%%%%%
%%%%%%%%%%%%%%%%%%%%%%%%%%%%%%%%%%%%%%%%%%%%%%%%%%%%
%%%%%%%%%%%%%%%%%%%%%%%%%%%%%%%%%%%%%%%%%%%%%%%%%%%%

\section{Basic Definitions}

Hierarchic combinations of first-order logic with background theories build upon sorted logic with equality \cite{Bachmair1994b}. We instantiate this framework with the BSR fragment and linear arithmetic over the reals as the \emph{base theory}. The \emph{base sort $\R$} shall always be interpreted as the reals $\Real$. For simplicity, we restrict our considerations to a single \emph{free sort $\S$}, which may be freely interpreted as some nonempty domain, as usual. 

To build up atoms over the base theory, we introduce a countably infinite set $V_\R$ of \emph{variables of the base sort}, usually denoted $x, y, z$, two sets of constant symbols $\Omega_\LA := \{r \mid r\in\Real \} \cup \{c_1, \ldots, c_\kappa\}$ and $\Omega_\LA^\alpha  := \{\alpha_{-\infty}\} \cup \{\alpha_{c+\varepsilon} \,|\, c\in\Omega_\LA\}$, and the predicates typical of linear arithmetic $\Pi_\LA := \{ <,\; \leq,\; \equals,\; \not\equals,\; \geq,\; >\}$ with their standard meaning. While the semantics of the constants $r \in \Real$ and the predicates in $\Pi_\LA$ is predefined, the exact values assigned to constant symbols $c_1, \ldots, c_\kappa$ shall be determined by a formal interpretation (see below). These constant symbols can be conceived as \emph{existentially quantified} variables, and therefore we refer to $c_1, \ldots, c_\kappa$ as \emph{Skolem constants}. The exact number $\kappa$ of additional Skolem constants is left open; it may be $0$.
In fact, we introduce even more Skolem constants by adding the set $\Omega_\LA^\alpha$. For notational convenience, we syntactically distinguish this special kind of base-sort constant symbols $\alpha_\t$, $\t$ being of the form $-\infty$ or $d+\varepsilon$ for arbitrary base-sort constant symbols $d\in\Omega_\LA$. These will play a key role when we instantiate base-sort variables later on. We associate an inherent meaning with constant symbols $\alpha_\t$ that will be formalized by means of special axioms: $\alpha_{-\infty}$ shall stand for a value smaller than any values assigned to other occurring constant symbols, and $\alpha_{d+\varepsilon}$ shall be assigned a value that is a little larger the $d$'s value but not too large. 
\medskip

In order to hierarchically extend the base theory by the BSR fragment, we introduce the free sort $\S$, a countably-infinite set $V_\S$ of \emph{free-sort variables}, usually denoted $u, w$, a finite set $\Omega$ of \emph{free constant symbols of sort $\S$} and a finite set $\Pi$ of \emph{free predicate symbols} equipped with appropriate sort information. Note that every predicate symbol in $\Pi$ has a finite, nonnegative arity and can be of a mixed sorting over the two sorts $\R$ and $\S$. For instance $P : \R \times \S \times \R$ denotes a predicate with two real-valued arguments and one argument of the free sort.
We use the symbol $\approx$ to denote the built-in equality predicate on $\S$. To avoid confusion, we assume each constant or predicate symbol to occur in at most one of the sets $\Omega_\LA$, $\Omega_\LA^\alpha$, $\Omega$, $\Pi_\LA$, $\Pi$ and that none of these sets contains the $\approx$ symbol.

%%%%%%%%%%%%%%%%%%%%%%%%%%%%%%%%%%%%%%%%%%%%%%%%%%%%%%
\begin{definition}[BSR Clause Fragment with Simple Bounds]\label{definition:BSRwithConstrSyntax}	
		Let $\Omega$ be a finite set of constant symbols $c$ of the free sort, and let $\Pi$ be a finite set of predicate symbols of finite arity over the sorts $\R$ and $\S$.
		
		An \emph{atomic constraint} is of the form $c\triangleleft d$ or $x\triangleleft d$ or $\alpha_\t \triangleleft d$ or $x\equals \alpha_\t$ or $\alpha_\t \equals \alpha_{\t'}$ with $c,d\in\Omega_\LA$, $x\in V_\R$, $\alpha_\t, \alpha_{\t'}\in\Omega_\LA^\alpha$ and $\triangleleft \in \{<,\leq,\equals,\not\equals,\geq,>\}$.
				
		A \emph{free atom} $A$ is either of the form $s\approx s'$ with $s,s'$ being free-sort constant symbols in $\Omega$ or free-sort variables in $V_\S$, respectively, or $A$ is of the form $P(s_1, \ldots, s_m)$, where $P:\xi_1\times\ldots\times\xi_m$ is an $m$-ary predicate symbol taken from $\Pi$. For each $i\leq m$ the term $s_i$ shall be of the sort $\xi_i$. If $\xi_i = \R$, then $s_i$ must be a variable $x\in V_\R$, and in case of $\xi_i = \S$, $s_i$ may be a variable $u\in V_\S$ or a constant symbol $c\in\Omega$.
		
		A \emph{clause} has the form $\Lambda \;\|\; \Gamma \to \Delta$, where $\Lambda$ is a multiset of atomic constraints, and $\Gamma$ and $\Delta$ are multisets of free atoms. We usually refer to $\Lambda$ as the \emph{constraint part} and to $\Gamma\to\Delta$ as the \emph{free part} of the clause.\\
		We conveniently denote the union of two multisets $\Theta$ and $\Theta'$ by juxtaposition $\Theta, \Theta'$. More-over, we often write $\Theta, A$ as an abbreviation for the multiset union $\Theta \cup \{A\}$. In our clause notation empty multisets are usually omitted left of ``$\to$'' and  denoted by $\Box$ right of ``$\to$'' (where $\Box$ at the same time stands for \emph{falsity}).
\end{definition}
The introduced clause notation separates linear arithmetic constraints from the free first-order part. We use the vertical double bar ``$\|$'' to indicate this on the syntactic level.  (On the semantic level ``$\|$'' is meaningless.) 
Intuitively, clauses $\Lambda \;\|\; \Gamma \to \Delta$ can be read as $\bigl(\bigwedge\Lambda \wedge \bigwedge\Gamma\bigr) \to \bigvee\Delta$, i.e.\ the multiset $\Lambda$ stands for a conjunction of atomic constraints, $\Gamma$ stands for a conjunction of the free atoms in it and $\Delta$ stands for a disjunction of the contained free atoms. All occurring variables are implicitly universally quantified. 

Requiring the free part $\Gamma\to\Delta$ of clauses to not contain any base-sort constant symbols does not limit expressiveness. Every base-sort constant symbol $c$ in the free part can safely be replaced by a fresh base-sort variable $x_c$ when an atomic constraint $x_c \equals c$ is added to the constraint part of the clause (a process known as purification \cite{Bachmair1994b, Kruglov2012}).

In the rest of the paper we omit the phrase ``over the BSR fragment with simple bounds'' when talking about clauses and clause sets, although we mainly restrict our considerations to this fragment.

A \emph{hierarchic interpretation} is an algebra $\A$ which interprets the base sort $\R$ as $\R^\A = \Real$, assigns real values to the Skolem constants in $\{c_1, \ldots, c_\kappa\}$ and $\Omega_\LA^\alpha$ and interprets all constants in $\Omega_\LA\cap\Real$ and predicates in $\Pi_\LA$ in the standard way. Moreover, $\A$ comprises a nonempty domain $\S^\A$, assigns to each free-sort constant symbol $c$ in $\Omega$ a domain element $c^\A \in \S^\A$ and interprets every predicate symbol $P\!:\!\xi_1\times\ldots\times\xi_m$ in $\Pi$ by a set $P^\A\subseteq \xi_1^\A\times\ldots\times\xi_m^\A$. Summing up, $\A$ extends the standard model of linear arithmetic and adopts the standard approach to semantics of (sorted) first-order logics when interpreting the free part of clauses.

Given a hierarchic interpretation $\A$, a \emph{variable assignment} is a sort-respecting total mapping $\beta: V_\R\cup V_\S \to \R^\A \cup \S^\A$. 
We write $\A(\beta)(s)$ to mean the \emph{value of the term $s$ under $\A$ with respect to the variable assignment $\beta$}. In accordance with the notation used so far, we thus define $\A(\beta)(v) := \beta(v)$ for every variable $v$ and $\A(\beta)(c) := c^\A$ for every constant symbol $c$.
As usual, we use the symbol $\models$ to denote truth under a hierarchic interpretation $\A$, possibly with respect to a variable assignment $\beta$.
In detail, we have the following for atomic constraints, equational and nonequational free atoms and clauses, respectively:
	\begin{itemize}
		\item $\A,\beta\models s\triangleleft s'$ if and only if $\A(\beta)(s) \triangleleft \A(\beta)(s')$,
		\item $\A,\beta\models s\approx s'$ if and only if $\A(\beta)(s) = \A(\beta)(s')$,
		\item $\A,\beta\models P(s_1, \ldots, s_m)$ if and only if $\bigl\<\A(\beta)(s_1), \ldots, \A(\beta)(s_m)\bigr\>\in P^\A$,
		\item $\A,\beta\models \Lambda\;\|\;\Gamma\to\Delta$ if and only if
			\begin{itemize}
				\item $\A,\beta\not\models s\triangleleft s'$ for some atomic constraint $(s\triangleleft s')\in\Lambda$, or
				\item $\A,\beta\not\models A$ for some free atom $A\in\Gamma$, or 
				\item $\A,\beta\models B$ for some free atom $B\in\Delta$. 
			\end{itemize}
	\end{itemize}
The variables occurring in clauses shall be universally quantified. Therefore, given a clause $C$, we call $\A$ a \emph{hierarchic model of $C$}, denoted $\A\models C$, if and only if $\A,\beta\models C$ holds for every variable assignment $\beta$. For clause sets $N$, we say $\A$ is a \emph{hierarchic model of $N$}, also denoted $\A\models N$, if and only if $\A$ is a hierarchic model of every clause in $N$.
We call a clause $C$ (a clause set $N$) \emph{satisfiable} if and only if there exists a hierarchic model $\A$ of $C$ (of $N$). 

\emph{Substitutions} $\sigma$ shall be defined in the standard way as sort-respecting mappings from variables to terms over our underlying signature. The \emph{restriction of the domain of a substitution $\sigma$ to a set $V$ of variables} is denoted by $\sigma|_V$ and shall be defined so that $v\sigma|_V := v\sigma$ for every $v\in V$ and $v\sigma|_V = v$ for every $v\not\in V$. While the application of a substitution $\sigma$ to terms, atoms and multisets thereof can be defined as usual, we need to be more specific for clauses. Consider a clause $C = \Lambda \;\|\; \Gamma \to \Delta$ and let $x_1, \ldots, x_k$ denote all base-sort variables occurring in $C$ for which $x_i\sigma \neq x_i$. We then set $C\sigma := \Lambda\sigma, x_1 \equals x_1\sigma,\, \ldots,\, x_k \equals x_k\sigma \;\|\; (\Gamma\to\Delta)\sigma|_{V_\S}$.
\emph{Simultaneous substitution} of $k \geq 1$ distinct variables $v_1, \ldots\, v_k$ with terms $s_1, \ldots, s_k$ shall be denoted in vector notation $\subst{v_1, \ldots, v_k}{s_1, \ldots, s_k}$ (or $\subst{\bar{v}}{\bar{s}}$ for short), where we require every $s_i$ to be of the same sort as $v_i$, of course.

Consider a clause $C$ and let $\sigma$ be an arbitrary substitution. The clause $C\sigma$ and variable-renamed variants thereof are called \emph{instances} of $C$.
A term $s$ is called \emph{ground}, if it does not contain any variables. 
A clause $C$ shall be called \emph{essentially ground} if it does not contain free-sort variables and for every base-sort variable $x$ occurring in $C$, there is an atomic constraint $x\equals d$ in $C$ for some constant symbol $d\in\Omega_\LA\cup\Omega_\LA^\alpha$. A clause set $N$ is \emph{essentially ground} if all the clauses it contains are essentially ground.

%%%%%%%%%%%%%%%%%%%%%%%%%%%%%%%%%%%%%%%%%%%%%%%%%%%%%
%%%%%%%%%%%%%%%%%%%%%%%%%%%%%%%%%%%%%%%%%%%%%%%%%%%%%
We can restrict the syntactic form of clauses even further without limiting expressiveness.
\begin{definition}[Normal Form of Clauses and Clause Sets]\label{definition:BSRwithConstrNormalform}
		A clause $\Lambda \;\|\; \Gamma \to \Delta$ is in \emph{normal form} if all base-sort variables which occur in $\Lambda$ also occur in $\Gamma \to \Delta$.
		A clause set $N$ is in \emph{normal form} if all clauses in $N$ are in normal form and pairwise variable disjoint.
\end{definition}
Let us briefly clarify why the above requirement on clauses does not limit expressiveness. 
Any base-sort variable $x$ not fulfilling the stated requirement can be removed from the clause $\Lambda\;\|\;\Gamma\to\Delta$ by existential quantifier elimination methods that transform $\Lambda$ into an equivalent constraint $\Lambda'$ in which $x$ does not occur.\footnote{Methods for the elimination of existentially quantified real variables include Fourier-Motzkin variable elimination \cite{Dantzig1973}, the Loos-Weispfenning procedure \cite{Loos1993} and many others, see e.g.\ Chapter 5 in \cite{Kroening2008}.} Moreover, $\Lambda'$ can be constructed in such a way that it contains only atomic constraints of the form admitted in Definition \ref{definition:BSRwithConstrSyntax} and so that no variables or constant symbols other than the ones in $\Lambda$ are necessary.

Given a clause set $N$, we will use the following notation: the set of all constant symbols occurring in $N$ shall be denoted by $\consts(N)$. While the set $\bconsts(N)$ exclusively contains all base-sort constant symbols from $\Omega_\LA$ that occur in $N$, all base-sort constant symbols $\alpha_\t$ appearing in $N$ shall be collected in the set $\aconsts(N) := \consts(N)\cap\Omega_\LA^\alpha$. The set of all free-sort constant symbols in $N$ is called $\fconsts(N)$. Altogether, the sets $\bconsts(N)$, $\aconsts(N)$ and $\fconsts(N)$ together form a partition of $\consts(N)$ for every clause set $N$.
We denote the set of all variables occurring in a clause $C$ (clause set $N$) by $\vars(C)$ ($\vars(N)$).

%%%%%%%%%%%%%%%%%%%%%%%%%%%%%%%%%%%%%%%%%%%%%%%%%%%%
%%%%%%%%%%%%%%%%%%%%%%%%%%%%%%%%%%%%%%%%%%%%%%%%%%%%
%%%%%%%%%%%%%%%%%%%%%%%%%%%%%%%%%%%%%%%%%%%%%%%%%%%%
%%%%%%%%%%%%%%%%%%%%%%%%%%%%%%%%%%%%%%%%%%%%%%%%%%%%

\section{Instantiation of Base-Sort Variables and Free-Sort Variables}\label{section:BaseModelTheory}

We first summarize the overall approach for base-sort variables in an intuitive way. To keep the informal exposition simple, we pretend that all base-sort constant symbols are taken from $\Real$. Consequently, we can speak of real values instead of constant symbols, and even refer to improper values such as $-\infty$ (a ``sufficiently small'' real) and $r+\varepsilon$ (a real ``slightly larger than $r$ but not too large''). A formal treatment with proper definitions will follow.

Given a finite clause set $N$, we intend to partition $\Real$ into finitely many parts such that satisfiability of $N$ necessarily leads to the existence of a hierarchic model $\A$ with the following property:
\begin{enumerate}[label=($\mathfrak{Pr}$), ref=($\mathfrak{Pr}$)]
	\item\label{enum:IntervalModel} Under $\A$ the free predicates occurring in $N$ cannot distinguish values that belong to the same part.
\end{enumerate}
Put differently, we have $\<\ldots, r, \ldots\> \in Q^\A$ if and only if $\<\ldots, r', \ldots\> \in Q^\A$, for every free predicate symbol $Q$ occurring in $N$, every part $p$ of $\Real$ and arbitrary reals $r, r' \in p$.
As soon as we found such a finite partition $\P$, we pick one real value $r_p \in p$ as representative from every part $p\in\P$. The following observation motivates why we can use those representatives instead of using universally quantified variables: Given a clause $C$ that contains a base-sort variable $x$, and given a set $\{d_1, \ldots, d_{k}\}$ of constant symbols such that $\{d_1^\A, \ldots, d_k^\A\} = \{r_p \mid p\in\P\}$ it holds ($*$) $\A\models C \;\Longleftrightarrow\; \A\models \bigl\{C\subst{x}{d_i} \bigm| 1\leq i \leq k \bigr\}$. The equivalence claims that we can transform universal quantification over the base domain into finite conjunction over all representatives of parts in $\P$. The formal version of this statement is given in Lemma \ref{lemma:EquisatisfiabilityBaseInstantiation}, and hierarchic models complying with property \ref{enum:IntervalModel} play a key role in its proof.
Since $\P$ is supposed to be finite, the resulting set of instances $\bigl\{C\subst{x}{d_i} \mid 1\leq i\leq k\bigr\}$ is finite, too.

It turns out that the \emph{elimination sets} described by Loos and Weispfenning  in \cite{Loos1993} (in the context of quantifier elimination for linear arithmetic) can be adapted to yield reasonable sets of representatives from which we can construct a finite partition exhibiting the described characteristics. In this approach the parts are intervals on the real axis.
	
For reasons of efficiency, we will operate on a more fine-grained level than described so far, as we define partitions of $\Real$ independently for certain groups of base-sort variables (induced by \emph{argument position classes}, cf.\ Definition \ref{definition:ConnectedArgumentPositions}). The possible benefit is a significant decrease in the number of necessary instances. But there is even more potential for savings. The complete line of definitions and arguments will be laid out in detail for one direction of instantiation along the real axis, namely in the positive direction starting from $-\infty$ (the ``default representative'') and going on to instantiation points $r+\varepsilon$ directly behind occurring constants $r$. However, one could as well proceed in the dual way, starting from $+\infty$ as the default and continuing with representatives $r-\varepsilon$ slightly smaller than occurring constants $r$. This duality is very promising from a practical point of view. As it turns out, one can choose the direction of instantiation independently for each base-sort variable that needs to be instantiated. Hence, one can always pick the direction that results in fewer instantiation points. Such a strategy might again considerably cut down the number of instances in the resulting ground clause set and therefore might lead to shorter processing times in automated reasoning procedures.

Formally, the aforementioned representatives are induced by constant symbols when interpreted under a hierarchic interpretation. Independently of any interpretation these constant symbols will serve as symbolic \emph{instantiation points}, i.e.\ they will be used to instantiate base-sort variables similar to the $d_i$ in equivalence ($*$).
We start off by defining the set of instantiation points that need to be considered for instantiation of a given base-sort variable. But first, we need a proper notion of what it means for a base-sort variable to be affected by a constraint. The following example shall illustrate the involved issues.
\begin{example}\label{example:ConnectedArgumentPositions}
	Consider the following clauses:
	\[
	\begin{array}{rclcll}
		x_1\not\equals -5 	&\| 	&		&\to	&T(x_1), 	&Q(x_1, x_2) ~, \\
		y_1 < 2,\; y_2 < 0	&\|	&		&\to	&		&Q(y_1, y_2) ~, \\
		z_1 \geq 6		&\|	&T(z_1)	&\to	&\Box	~.
	\end{array}
	\]	
	The variables $x_1$, $y_1$, $y_2$ and $z_1$ are affected by the constraints in which they occur explicitly. Technically, it is more suitable to speak of the \emph{argument position} $\<T,1\>$ instead of variables $x_1$ and $z_1$ that occur as the first argument of predicate symbol $T$ in the first and third clause. Speaking in such terms, argument position $\<T,1\>$ is directly affected by the constraints $x_1 \not\equals -5$ and $z_1 \geq 6$, argument position $\<Q,1\>$ is directly affected by $x_1\not\equals -5$ and $y_1 < 2$, and finally $\<Q,2\>$ is affected by $y_2 < 0$.
	As soon as we take logical consequences into account, the notion of ``being affected'' needs to be extended. The above clause set, for instance, logically entails the clause $x \geq 6 \;\|\; \to Q(x,y)$. Hence, although not \emph{directly} affected by the constraint $z_1 \geq 6$ in the clause set, the argument position $\<Q,1\>$ is still indirectly subject to this constraint. The source of this effect lies in the first clause as it establishes a connection between argument positions $\<T,1\>$ and $\<Q,1\>$ via the simultaneous occurrence of variable $x_1$ in both argument positions.
\end{example}
One lesson learned from the example is that argument positions can be connected by variable occurrences. Such links in a clause set $N$ shall be expressed by the relation $\con_N$.

%%%%%%%%%%%%%%%%%%%%%%%%%%%%%%%%%%%%%%%%%%%%%%%%%%%%%%%
%%%%%%%%%%%%%%%%%%%%%%%%%%%%%%%%%%%%%%%%%%%%%%%%%%%%%%%
\begin{definition}[Connections between Argument Positions and Argument Position Classes]\label{definition:ConnectedArgumentPositions}
	Let $N$ be a clause set in normal form.
	We define the relation $\con_N$ to be the smallest equivalence relation over pairs in $\Pi\times\Nat$ such that $\<Q,j\> \con_N \<P,i\>$ whenever there is a clause in $N$ containing free atoms $Q(\ldots,v,\ldots)$ and $P(\ldots,v,\ldots)$ in which the variable $v$ occurs at the $j$-th and $i$-th argument position, respectively. (Note that $Q=P$ or $j=i$ is possible.)
	
	The relation $\con_N$ induces equivalence classes $[\<P,i\>]_{\con_N}$ of argument positions in the usual way. To simplify notation a bit, we write $[\<P,i\>]$ instead of $[\<P,i\>]_{\con_N}$ when the set $N$ is clear from the context.
	
	Consider a variable $v$ that occurs at the $i$-th argument position of a free atom $P(\ldots, v, \ldots)$ in $N$. We denote the \emph{argument position class of $v$ in $N$} by $\APC_N(v)$, i.e.\ $\APC_N(v) := [\<P,i\>]$. 
	If $v$ is a free-sort variable that exclusively occurs in equations $v\approx s$ in $N$, we set $\APC_N(v) := \emptyset$.
\end{definition}

%%%%%%%%%%%%%%%%%%%%%%%%%%%%%%%%%%%%%%%%%%%%%%%%%%%%%%
Next, we collect the instantiation points that are necessary to eliminate base-sort variables by means of finite instantiation. In order to do this economically, we rely on the same idea that also keeps elimination sets in \cite{Loos1993} comparatively small (see the discussion below). 
%%%%%%%%%%%%%%%%%%%%%%%%%%%%%%%%%%%%%%%%%%%%%%%%%%%%%%
%%%%%%%%%%%%%%%%%%%%%%%%%%%%%%%%%%%%%%%%%%%%%%%%%%%%%%
\begin{definition}[Instantiation Points for Base-Sort Argument Positions]\label{definition:BaseInstantiationPoints}
	Let $N$ be a clause set in normal form and let $P: \xi_1\times\ldots\times \xi_m$ be a free predicate symbol occurring in $N$. Let $J := \{i \;|\; \xi_i = \R, 1\leq i\leq m \}$ be the indices of $P$'s base-sort arguments. For every $i\in J$, we define $\I_{P,i,N}$ to be the smallest set fulfilling
	\begin{enumerate}[label=(\roman{*}), ref=(\roman{*})]
		\item $d\in\I_{P,i,N}$ if there exists a clause $C$ in $N$ containing an atom $P(\ldots, x, \ldots)$ in which $x$ occurs as the $i$-th argument and a constraint $x\equals d$ or $x \geq d$ with $d\in\Omega_\LA$ appears in $C$, and
		\item $\alpha_{d+\varepsilon}\in\I_{P,i,N}$ if there exists a clause $C$ in $N$ containing an atom of the form $P(\ldots, x, \ldots)$, in which $x$ is the $i$-th argument and a constraint of the form $x\not\equals d$ or $x > d$ or $x\equals \alpha_{d+\varepsilon}$ with $d\in\Omega_\LA$ appears in $C$.
	\end{enumerate}
\end{definition}
The most apparent peculiarity about this definition is that atomic constraints of the form $x < d$ and $x \leq d$ are completely ignored when collecting instantiation points for $x$'s argument position. First of all, this is one of the aspects that makes this definition interesting from the efficiency point of view, because the number of instances that we have to consider might decrease considerably in this way. To develop an intuitive understanding why it is enough to consider constraints $x\triangleleft d$ with $\triangleleft \in \{\equals, \not\equals, \geq, >\}$ when collecting instantiation points, the following example may help.

\begin{example}
	Consider two clauses $C = x > 2,\, x \leq 5 \;\| \to T(x)$ and $D = x < 0 \;\|\; T(x)\to\Box$. Recall that we are looking for a finite partition $\P$ of $\Real$ such that we can construct a hierarchic model $\A$ of the clause set $\{C, D\}$ that complies with \ref{enum:IntervalModel}, i.e.\ for every part $p\in \P$ and arbitrary real values $r_1, r_2 \in p$ it shall hold $r_1 \in T^\A$ if and only if $r_2 \in T^\A$. A natural candidate for $\P$ is $\{(-\infty,0), [0,2], (2,5], (5,+\infty) \}$ which takes every atomic constraint in $C$ and $D$ into account. Correspondingly, we find the candidate predicate $T^\A = (2,5]$. Consequently, $\A$ is a hierarchic model of $C$ and $D$ alike and it exhibits property \ref{enum:IntervalModel}.
	
	But there are other interesting possibilities, for instance, the more coarse-grained partition $\{(-\infty, 2], (2, +\infty)\}$ together with the predicate $T^\A = (2, +\infty)$.
	This latter candidate partition completely ignores the constraints $x < 0$ and $x \leq 5$ that constitute upper bounds on $x$ and in this way induces a simpler partition. Dually, we could have concentrated on the upper bounds instead (completely ignoring the lower bounds). This would have lead to the partition $\{(-\infty, 0), [0, 5], (5, +\infty)\}$ and candidate predicates $T^\A = [0, 5]$ (or $T^\A = [0, +\infty)$). Both ways are possible, but the former yields a coarser partition and is thus more attractive.
\end{example}
The example has revealed quite some freedom in choosing an appropriate partition of the reals. A large number of parts directly corresponds to a large number of instantiation points -- one for each interval --, and therefore leads a large number of instances that need to be considered by a reasoning procedure. Hence, with respect to efficiency, it is of high interest to keep the partition coarse.

%%%%%%%%%%%%%%%%%%%%%%%%%%%%%%%%%%%%%%%%%%%%%%%%%%%%%%%
%%%%%%%%%%%%%%%%%%%%%%%%%%%%%%%%%%%%%%%%%%%%%%%%%%%%%%%
\begin{definition}[Instantiation Points for Argument Position Classes and Induced Partition]\label{definition:BaseVariableInstantiationPoints}
	Let $N$ be a clause set in normal form and let $\A$ be a hierarchic interpretation.
	For every equivalence class $[\<P,i\>]$ induced by $\con_N$ we define the following:

	The set $\I_{[\<P,i\>], N}$ of \emph{instantiation points for $[\<P,i\>]$} is defined by $\I_{[\<P,i\>], N} := \{\alpha_{-\infty}\} \;\cup$ \linebreak $\bigcup_{\<Q,j\>\in[\<P,i\>]} \I_{Q,j,N}$.
	
	The sequence $r_1, \ldots, r_k$ shall comprise all real values in the set $\bigl\{c^\A \bigm| c\in\I_{[\<P,i\>], N}\cap\Omega_\LA \text{ or } \alpha_{c+\varepsilon}\in \I_{[\<P,i\>], N} \bigr\}$ ordered so that $r_1 < \ldots < r_k$.
	Given a real number $r$, we say \emph{$\I_{[\<P,i\>], N}$ $\A$-covers $r$} if there exists an instantiation point $c\in\I_{[\<P,i\>],N}\cap\Omega_\LA$ with $c^\A = r$; analogously $(r+\varepsilon)$ is \emph{$\A$-covered by $\I_{[\<P,i\>],N}$} if there is an instantiation point $\alpha_{c+\varepsilon}\in\I_{[\<P,i\>],N}$ with $c^\A = r$.

	The partition $\P_{[\<P,i\>],N,\A}$ of the reals into finitely many intervals shall be the smallest partition (smallest w.r.t.\ the number of parts) fulfilling the following requirements:
		
		\begin{enumerate}[label=(\roman{*}), ref=\roman{*}]
			\item\label{enum:BaseVariableInstantiationPoints:One} If $r_1$ is $\A$-covered by $\I_{[\<P,i\>],N}$, then $(-\infty, r_1)\in\P_{[\<P,i\>],N,\A}$; otherwise, $(-\infty, r_1]\in\P_{[\<P,i\>],N,\A}$.

			\item\label{enum:BaseVariableInstantiationPoints:Two} For every $j$, $1\leq j\leq k$, if $r_j$ and $r_{j}+\varepsilon$ are both $\A$-covered by $\I_{[\<P,i\>],N}$, then $[r_j, r_j] = \{r_j\}\in\P_{[\<P,i\>],N,\A}$.
								
			\item\label{enum:BaseVariableInstantiationPoints:Three} For every $\ell$, $1\leq \ell< k$,
				\begin{enumerate}[label=(\ref{enum:BaseVariableInstantiationPoints:Three}.\roman{*}), ref=(\ref{enum:BaseVariableInstantiationPoints:Three}.\roman{*})]
					\item\label{enum:BaseVariableInstantiationPoints:ThreeOne} if $r_\ell+\varepsilon$ and $r_{\ell+1}$ are both $\A$-covered by $\I_{[\<P,i\>],N}$, then $(r_\ell, r_{\ell+1})\in\P_{[\<P,i\>],N,\A}$,							
					\item\label{enum:BaseVariableInstantiationPoints:ThreeTwo} if $r_\ell+\varepsilon$ is $\A$-covered by $\I_{[\<P,i\>],N}$ but $r_{\ell+1}$ is not, then $(r_\ell, r_{\ell+1}]\in\P_{[\<P,i\>],N,\A}$,
					\item\label{enum:BaseVariableInstantiationPoints:ThreeThree} if $r_\ell+\varepsilon$ is not $\A$-covered by $\I_{[\<P,i\>],N}$ but $r_{\ell+1}$ is, then $[r_\ell, r_{\ell+1})\in\P_{[\<P,i\>],N,\A}$,							
					\item\label{enum:BaseVariableInstantiationPoints:ThreeFour} if neither $r_\ell+\varepsilon$ nor $r_{\ell+1}$ is $\A$-covered by $\I_{[\<P,i\>],N}$, then $[r_\ell, r_{\ell+1}]\in\P_{[\<P,i\>],N,\A}$.
				\end{enumerate}
					
			\item\label{enum:BaseVariableInstantiationPoints:Four} If $r_k+\varepsilon$ is $\A$-covered by $\I_{[\<P,i\>],N}$, then $(r_k, +\infty)\in\P_{[\<P,i\>],N,\A}$; otherwise $[r_k, +\infty)\in\P_{[\<P,i\>],N,\A}$.
		\end{enumerate}
\end{definition}
Please note that partitions as described in the definition do always exist, and do not contain empty parts.

%%%%%%%%%%%%%%%%%%%%%%%%%%%%%%%%%%%%%%%%%%%%%%%%%%%

We next fix the semantics of constant symbols $\alpha_\t$ by giving appropriate axioms that make precise what it means for $\alpha_{-\infty}$ to be ``small enough'' and for $\alpha_{c+\varepsilon}$ to be ``a little larger than $c$ but not too large''.

\begin{definition}[Axioms for Instantiation Points $\alpha_\t$]\label{definition:BaseInstantiationAxioms}
	Let $\I$ be a set of instantiation points and $\C\subseteq \Omega_\LA$ be a set of base-sort constant symbols. We define the \emph{set of axioms} $\Ax_{\I, \C}:= \bigl\{ \alpha_{-\infty} < c  \bigm| \alpha_{-\infty}\in\I \text{ and } c\in\C\bigr\} \;\cup$ $\bigl\{ d < \alpha_{d+\varepsilon} \bigm| \alpha_{d+\varepsilon}\in\I\} \;\cup\; \bigl\{ d < c \;\to\; \alpha_{d+\varepsilon} < c \bigm| \alpha_{d+\varepsilon}\in\I$ $\text{ and } c\in\C\setminus\{d\} \bigr\} \;\cup\; \bigl\{ d \equals c \;\to\; \alpha_{d+\varepsilon} \equals \alpha_{c+\varepsilon} \bigm| \alpha_{d+\varepsilon}\in\I$ $\text{ and } c\in\C\setminus\{d\} \bigr\}$.
\end{definition}
The axioms are clearly not in the admissible clause form. But they can easily be transformed into proper clauses with empty free parts. For convenience, we stick to the above notation, but keep in mind that formally the axioms have a form in accordance with Definition \ref{definition:BSRwithConstrSyntax}.

%%%%%%%%%%%%%%%%%%%%%%%%%%%%%%%%%%%%%%%%%%%%%%%%%%%%
Due to the meaning that we have just formally given to instantiation points $\alpha_\t$, it can be proven (cf. Lemma \ref{lemma:PartitionRepresentatives} in the appendix) that we can always find a constant symbol $c_p$ in the set of instantiation points that we can use to represent a given interval $p \in \P$. 
%%%%%%%%%%%%%%%%%%%%%%%%%%%%%%%%%%%%%%%%%%%%%%%%%%%%
We have arrived at one of the core results of the present paper. The next lemma shows that we can eliminate base-sort variables $x$ from clauses $C$ in a clause set $N$ by replacing $C$ with finitely many instances in which $x$ is substituted with the instantiation points that we computed for $x$. The resulting clause set shall be called $\widehat{N}_x$. In addition, the axioms that stipulate the meaning of newly introduced constant symbols $\alpha_\t$ need to be added to $\widehat{N}_x$. Iterating this step for every base-sort variable in $N$ eventually leads to a clause set that is essentially ground with respect to the constraint parts of the the clauses it contains (free-sort variables need to be treated separately, of course, see the end of this section).

\begin{lemma}\label{lemma:EquisatisfiabilityBaseInstantiation}
	Let $N$ be a clause set in normal form, and assume that $N$ contains all the axioms in $\Ax_{\aconsts(N), \bconsts(N)}$ and for every $\alpha_{e+\varepsilon} \in \aconsts(N)$ we have $e\in\bconsts(N)$.	
	Suppose there is a clause $C$ in $N$ which contains a base-sort variable $x$. Let the clause set $\widehat{N}_x$ be constructed as follows: $\widehat{N}_x := \bigl(N\setminus\{ C \}\bigr)  \;\cup\; \bigl\{C\subst{x}{c}	\;\bigm|\; c\in\I_{\APC_N(x), N} \bigr\} \;\cup\; \Ax_{\I_{\APC_N(x), N}, \bconsts(N)}$.
	The original clause set $N$ is satisfiable if and only if $\widehat{N}_x$ is satisfiable. 
\end{lemma}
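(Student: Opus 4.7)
The plan is to prove the biconditional by establishing each direction with its own model construction. For the forward direction ($N$ satisfiable $\Rightarrow$ $\widehat{N}_x$ satisfiable), I would start from any $\A \models N$ and only modify the interpretation of the ``genuinely new'' instantiation constants, i.e.\ the $\alpha_\t \in \I_{\APC_N(x), N} \setminus \aconsts(N)$, using the assignment recipe of Proposition~\ref{proposition:ImproperAxiomSatisfiability}. Call the result $\B$. Because those new constants do not occur anywhere in $N$, $\B$ still satisfies $N \setminus \{C\}$ and the old axioms $\Ax_{\aconsts(N), \bconsts(N)}$, while by construction it also satisfies $\Ax_{\I_{\APC_N(x), N}, \bconsts(N)}$; each instance $C\subst{x}{c}$ follows because $\A \models C$ under every variable assignment, in particular the one sending $x$ to $c^\A = c^\B$.

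The substantive direction is $\widehat{N}_x \Rightarrow N$. Given $\A \models \widehat{N}_x$, I would build a model $\B$ that copies $\A$ on the base sort, on all constant symbols, and on the free-sort domain, but redefines every free predicate extension \emph{cell-wise}: for $P : \xi_1 \times \ldots \times \xi_m \in \Pi$, set $(v_1, \ldots, v_m) \in P^\B$ iff $(v_1^*, \ldots, v_m^*) \in P^\A$, where $v_j^* := v_j$ at free-sort positions and $v_j^* := c_{[\<P,j\>], p_j}^\A$ at base-sort positions, with $p_j$ the partition of $v_j$ in $\P_{[\<P,j\>], N, \A}$ and $c_{[\<P,j\>], p_j}$ the representative produced by Lemma~\ref{lemma:PartitionRepresentatives}. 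Since no constant interpretation is touched, $\B$ remains a model of both axiom sets.

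To check $\B \models N$, I fix an arbitrary clause $C' \in N$ and assignment $\beta$ and introduce the \emph{representative assignment} $\beta^*$: on every base-sort variable $y$ of $C'$ put $\beta^*(y) := c_{\APC_N(y), p_y}^\A$, where $p_y$ is the partition of $\beta(y)$ in $\P_{\APC_N(y), N, \A}$, and on free-sort variables let $\beta^* = \beta$; variable-disjointness from the normal form guarantees that $\APC_N(y)$ and hence $\beta^*$ are well defined. The proof then rests on two claims: \textit{(A)} for every free atom $A$ of $C'$, $\B, \beta \models A$ iff $\A, \beta^* \models A$; and \textit{(B)} every atomic constraint of $C'$ satisfied under $\beta$ is also satisfied under $\beta^*$. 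Claim (A) is immediate from the definition of $P^\B$ together with the fact that $y$ occurs only at argument positions belonging to $\APC_N(y)$, so the partitioning consulted inside $P^\B$ matches the one used to define $\beta^*(y)$. Given (A) and (B), either some constraint already fails under $\beta$ and $\B, \beta \models C'$ trivially, or all constraints hold under $\beta^*$; in the latter case I invoke $\A \models C'$ when $C' \neq C$, and $\A \models C\subst{x}{c_x}$ with $c_x := c_{\APC_N(x), p_x}$ when $C' = C$, to obtain some free atom of $C'$ witnessing the disjunction under $\A, \beta^*$, then transport that witness to $\B, \beta$ via (A).

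I expect the main obstacle to be claim (B), and within it the upper-bound constraints $y \triangleleft d$ with $\triangleleft \in \{<, \leq\}$, for which Definition~\ref{definition:BaseInstantiationPoints} deliberately keeps $d^\A$ off the grid of partition boundaries. My argument here is a case analysis on the shape of $p_y$ via Lemma~\ref{lemma:PartitionRepresentatives}: case (i) delivers $c_y^\A < d^\A$ directly from $d \in \bconsts(N)$; case (ii) gives $c_y^\A = r_\ell \leq \beta(y)$, so $\beta(y) \triangleleft d^\A$ implies $c_y^\A \triangleleft d^\A$; case (iii) gives $r_\ell < \beta(y)$, so $\beta(y) \triangleleft d^\A$ forces $r_\ell < d^\A$, and then the axiom $e < d \to \alpha_{e+\varepsilon} < d$ in $\Ax_{\I_{\APC_N(x), N}, \bconsts(N)}$ --- instantiated at the $e$ underlying $c_y = \alpha_{e+\varepsilon}$ --- yields $c_y^\A < d^\A$. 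The remaining constraint forms $y \triangleleft d$ with $\triangleleft \in \{\equals, \not\equals, \geq, >\}$ and $y \equals \alpha_\t$ are easier, because $d^\A$ (respectively the subscript value of $\alpha_\t$) is by construction one of the boundary values $r_j$, so $\beta(y)$ and $c_y^\A$ lie in the same partition and thus bear the same relation to $d^\A$, with the axiom $d \equals c \to \alpha_{d+\varepsilon} \equals \alpha_{c+\varepsilon}$ handling coincidences $d^\A = c^\A$.
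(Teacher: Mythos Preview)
Your proposal is correct and follows essentially the same route as the paper: the forward direction via Proposition~\ref{proposition:ImproperAxiomSatisfiability}, and the backward direction via the identical predicate redefinition through partition representatives together with the representative assignment $\beta^*$ (the paper's $\beta_\varphi$), with your claims (A) and (B) being a clean contrapositive repackaging of the paper's literal-by-literal transfer. The one slip is in your case~(iii) for $\triangleleft\in\{<,\leq\}$: the axiom $e<d\to\alpha_{e+\varepsilon}<d$ need not lie in $\Ax_{\I_{\APC_N(x),N},\bconsts(N)}$ when $\APC_N(y)\neq\APC_N(x)$, but you can simply invoke Lemma~\ref{lemma:PartitionRepresentatives}\ref{enum:PartitionRepresentatives:Three} directly (as the paper does) to obtain $c_y^\A < d^\A$ from $r_\ell < d^\A$ and $d\in\bconsts(N)$.
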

Due to space limitations, we only provide an informal outline of the proof here. The full details are formulated in the appendix.
The line of argument leading from a hierarchic model of the original clause set $N$ to a hierarchic model of the modified clause set $\widehat{N}_x$ is almost trivial. The converse direction, however, rests on a model construction that yields a hierarchic model $\B$ which complies with property \ref{enum:IntervalModel}. Given a hierarchic model $\A$ of the instantiated clause set $\widehat{N}_x$, we construct the partition $\P_{\APC_N(x), N, \A}$ based on the argument position class associated to $x$ in the original clause set $N$. We know that each part $p$ in the partition can be represented by a specific instantiation point $c_p$ in $\I_{\APC_N(x), N}$ 
and for each of these instantiation points there is one clause in the modified clause set $\widehat{N}_x$ in which $x$ is instantiated by $c_p$.
Since $\B$ is supposed to comply with \ref{enum:IntervalModel}, i.e.\ it shall not distinguish between real values that belong to the same part, the information how the model $\A$ treats the representative of an interval can be transferred to all values from this interval. 
An example might be may help to illustrate the key ideas.
\begin{example}
	Consider the clause set $N = \bigl\{x > 2,\, z\equals 4 \;\|\; Q(x, z) \to T(x)\bigr\}$.	
	There are two instantiation points for variable $x$, namely $\alpha_{-\infty}$ and $\alpha_{2+\varepsilon}$. This leads to
		\begin{align*}
			\widehat{N}_x =\; &\bigl\{ \alpha_{-\infty} > 2,\, z\equals 4,\, x\equals \alpha_{-\infty} \;\|\; Q(x, z) \to T(x) \;,\\ 
						&\quad \alpha_{2+\varepsilon} > 2,\, z\equals 4,\, x\equals \alpha_{2+\varepsilon} \;\|\; Q(x, z) \to T(x) \bigr\}
						\;\;\cup\; \Ax_{\{\alpha_{-\infty}, \alpha_{2+\varepsilon}\}, \{2,4\}}
		\end{align*}
	where the axioms express the fact that the value of $\alpha_{-\infty}$ is strictly smaller than $2$ and the value of $\alpha_{2+\varepsilon}$ shall lie within the interval $(2, 4)$ in every hierarchic model of $\widehat{N}_x$. 
	
	We assume to have a hierarchic model $\A$ of $\widehat{N}_x$ at hand with $\alpha_{-\infty}^\A = 0$ and $\alpha_{2+\varepsilon}^\A = 3$. Then the two intervals $(-\infty, 2]$ and $(2,+\infty)$ constitute the partition $\P_{\APC_N(x), N, \A}$, and the partition $\P_{\APC_N(z), N, \A}$ is given by $\bigl\{(-\infty, 4), [4, +\infty)\bigr\}$. Moreover, assume that the set $Q^\A$ contains the pairs $\<0, 0\>$, $\<3,4\>$ and $\<7,4\>$ while the set $T^\A$ shall be the union of intervals $(0, 5] \cup [6,7)$. Clearly, $\A$ is a hierarchic model of $\widehat{N}_x$. (However, $\A$ is not a hierarchic model of $N$, since the pair $\<7,4\>$ is in $Q^\A$ but $7$ does not belong to $T^\A$.)
	
	We now construct a hierarchic model $\B$ from $\A$ such that $\B$ complies with \ref{enum:IntervalModel}.
	First of all, $\B$ takes over all values assigned to constant symbols, i.e.\ $\alpha_{-\infty}^\B := \alpha_{-\infty}^\A$ and $\alpha_{2+\varepsilon}^\B := \alpha_{2+\varepsilon}^\A$. The definition of $T$ under $\B$ will be piecewise with respect to the partition $\P_{[\<T,1\>], N, \A} = \P_{\APC_N(x), N, \A} = \{(-\infty, 2], (2, +\infty)\}$. To do so, we use the idea of representatives: the interval $(-\infty, 2]$ is represented by $\alpha_{-\infty}^\A = 0$ and the interval $(2, +\infty)$ has $\alpha_{2+\varepsilon}^\A = 3$ as its representative. At the same time, $\alpha_{-\infty}^\A$ also represents the interval $(-\infty, 4)$ in partition $\P_{[\<Q,2\>], N, \A}$, and $[4, +\infty)$ is represented by the constant $4$.
	Putting the abstract idea of property \ref{enum:IntervalModel} into action, the set $T^\B$ will be defined so that $(-\infty, 2] \subseteq T^\B$ if and only if $\alpha_{-\infty}^\A \in T^\A$, and $(2, +\infty) \subseteq T^\B$ if and only if $\alpha_{2+\varepsilon}^\A \in T^\A$. For the set $Q^\B$ matters are technically more involved but follow the same scheme. For instance, all pairs $\<r_1, r_2\>$ with $r_1 \in (2, +\infty)$ and $r_2 \in [4, +\infty)$ shall be in $Q^\B$ if and only if the pair $\<\alpha_{2+\varepsilon}^\A, 4\>$ is in $Q^\A$. 
	Consequently, we end up with $T^\B = (2, +\infty]$ and $Q^\B = \{ \<r_1, r_2\> \mid r_1 \in (-\infty, 2] \text{ and } r_2 \in (-\infty, 4) \} \cup \{\<r_1, r_2\> \mid r_1 \in (2, +\infty) \text{ and } r_2 \in [4, +\infty)\}$.
	The net results is a hierarchic interpretation $\B$ that is a model of $\widehat{N}_x$ just as $\A$ is. But beyond that $\B$ is also a hierarchic model of the original clause set $N$.
\end{example}

%%%%%%%%%%%%%%%%%%%%%%%%%%%%%%%%%%%%%%%%%%%%%%%%%%%%
We now briefly turn our attention to the elimination of free-sort variables.
Regarding formulas of free first-order logic without equality, it is a well-known fact that every satisfiable formula has a Herbrand model. When considering finite clause sets over the Bernays-Sch\"onfinkel fragment, Herbrand universes simply comprise all occurring constant symbols. Consequently, exhaustive instantiation of variables by all occurring constant symbols leads to an equisatisfiable clause set that is ground and finite. This does not change, if we add equality to the fragment, and thus end up with BSR.
It turns out that we can directly transfer this idea to the BSR fragment extended with simple bounds. A detailed formulation is given in the appendix (Lemma \ref{lemma:EquisatisfiabilityFreeInstantiation}).

%%%%%%%%%%%%%%%%%%%%%%%%%%%%%%%%%%%%%%%%%%%%%%%%%%%%
%%%%%%%%%%%%%%%%%%%%%%%%%%%%%%%%%%%%%%%%%%%%%%%%%%%%
%%%%%%%%%%%%%%%%%%%%%%%%%%%%%%%%%%%%%%%%%%%%%%%%%%%%
%%%%%%%%%%%%%%%%%%%%%%%%%%%%%%%%%%%%%%%%%%%%%%%%%%%%

\section{The Time Complexity of Deciding Satisfiability}\label{section:Complexity}

In the last section we have seen how to eliminate base-sort and free-sort variables by means of finite instantiation. In this section, we put these instantiation mechanisms together to obtain a nondeterministic algorithm that decides the hierarchic satisfiability problem for the BSR fragment with simple bounds and investigate its complexity. As a measure of the length of clause sets, clauses, atoms and multisets thereof, we use the number of occurrences of constant symbols and variables in the respective object, and denote it by $\len(\cdot)$.
\begin{theorem}\label{theorem:Complexity}
	Let $N$ be a clause set (of length at least $2$) that does not contain any constant symbol from $\Omega_\LA^\alpha$ but at least one free-sort constant symbol. Satisfiability of $N$ can be decided in nondeterministic exponential time. To put it more precisely: the problem lies in $\text{NTIME}\bigl(\len(N)^{\mathfrak{c}\cdot\len(N)}\bigr)$ for some constant $\mathfrak{c} > 1$.
\end{theorem}
\begin{proof}[Proof Sketch]

	We devise a na\"ive algorithm that decides a given problem instance as follows.
	As input we assume a finite clause set $N$ such that $\aconsts(N) = \emptyset$ and $\fconsts(N)\neq\emptyset$.
	\begin{enumerate}[label=(\Roman{*}), ref=(\Roman{*})]
		\item\label{enum:DecisionAlgorithmShort:One} Transform the input clause set $N$ into normal form. 
		
		\item\label{enum:DecisionAlgorithmShort:Two} Instantiate all variables in $N$ to obtain an equisatisfiable finite set of ground clauses
			\begin{align*}
				 N' := \Bigl\{ C\subst{\overline{x}, \overline{u}}{\overline{c}, \overline{d}} \Bigm|\; 
				 	&C \in N \text{ and } \{x_1, \ldots, x_k\} = \vars(C) \cap V_\R \text{ and }\\
					 &\{u_1, \ldots, u_\ell\} = \vars(C) \cap V_\S \text{ and }\\
					 &c_i\in\I_{\APC_{N}(x_i), N} \text{ and } d_j\in\fconsts(N) \text{ for all $i$ and $j$} \Bigr\} ~.
			\end{align*}
			
			Let $\Ax$ be the union of all axiom sets $\Ax_{\{\alpha_\t\}, \bconsts(N)}$ such that $\alpha_\t$ occurs in $N'$.

		\item\label{enum:DecisionAlgorithmShort:Three} 
			Nondeterministically construct a hierarchic interpretation $\A$.		

		\item\label{enum:DecisionAlgorithmShort:Four} Check whether $\A$ is a hierarchic model of $N'\cup \Ax$.
	\end{enumerate}
	A detailed exposition of the algorithm and a proof of its correctness can be found in the appendix, alongside with a detailed analysis of its complexity.	
	At this point, we only take a brief look at the running time.
	
	In a worst-case scenario, step \ref{enum:DecisionAlgorithmShort:One} might lead to an exponential blowup in the length of the clause set. 
	Computing the information required to execute the instantiation in Step \ref{enum:DecisionAlgorithmShort:Two}, e.g.\ constructing the sets $\bconsts(N)$, $\fconsts(N)$, $\I_{\APC_N(x_i), N}$ and the relation $\con_N$ together with the induced equivalence classes, can all be done in polynomial time once for all base-sort variables. The same holds for collecting all necessary axioms in the set $\Ax$. Clearly, the instantiation itself might lead to another exponential blowup in the length of the clause set in the worst case. Finally, the nondeterministic construction of a candidate interpretation $\A$ and checking whether $N'\cup \Ax$ is valid under $\A$ consumes time polynomial in the clause set again.
\end{proof}

NEXPTIME-completeness of satisfiability for the Bernays-Sch\"onfinkel fragment (cf.\ \cite{Lewis1980}) immediately yields NEXPTIME-hardness of satisfiability for finite clause sets over the BSR fragment with simple bounds. Together with Theorem \ref{theorem:Complexity} we thus obtain NEXPTIME-completeness of the problem.
\begin{corollary}
	The problem of deciding satisfiability of finite clause sets over the BSR fragment with simple bounds is NEXPTIME-complete.
\end{corollary}

%%%%%%%%%%%%%%%%%%%%%%%%%%%%%%%%%%%%%%%%%%%%%%%%%%%%
%%%%%%%%%%%%%%%%%%%%%%%%%%%%%%%%%%%%%%%%%%%%%%%%%%%%
%%%%%%%%%%%%%%%%%%%%%%%%%%%%%%%%%%%%%%%%%%%%%%%%%%%%
%%%%%%%%%%%%%%%%%%%%%%%%%%%%%%%%%%%%%%%%%%%%%%%%%%%%

\section{Beyond Simple Bounds}\label{section:Basification}

In this section we long to answer the question how simple our constraints have to be. The most complex atomic constraints we have allowed by now are of the form $x \triangleleft c$. Being able to cope with this kind, we can leverage the idea of \emph{flattening} to deal with more complicated constraints such as $3x + c < 1$. 
The basic idea rests on two steps:
	\begin{enumerate}[label=(\Roman{*}), ref=(\Roman{*})]
		\item Transform this constraint into the equivalent $x < \tfrac{1}{3}-\tfrac{1}{3} c$.
		\item Introduce a fresh Skolem constant $b$, transform the constraint into $x < b$ and add a 
\emph{defining clause} $C_b$ expressing $b \equals \tfrac{1}{3}-\tfrac{1}{3} c$ to the clause set.
	\end{enumerate}
These two steps already indicate that this technique is restricted to atomic constraints that are either ground or univariate and linear, and in which the standard operations addition, subtraction, multiplication and division on the reals may be involved. But we may even allow free function symbols $g:\xi_1\times \ldots \times\xi_m \to \xi$ with $\xi_1, \ldots, \xi_m, \xi \in \{\R, \S\}$, as long as all subterms occurring below $g$ are ground. 

The key insight at this point is that complex ground terms $s$ can be replaced by fresh constant symbols $b_s$ of the corresponding sort, if we add a \emph{defining clause} $C_{b_s}$ that identifies $b_s$ with $s$ (a technique called \emph{basification} in \cite{Kruglov2012}) -- see below for details. Of course, the replacement of $s$ must be done consistently throughout the clause set. Consequently, we can extend the syntax of clauses so that more complex terms are admitted.

\begin{definition}[BSR Clause Fragment with Ground LA Bounds]\label{definition:extBSRwithConstrSyntax}
		Let $\Omega'$ collect nonconstant free function symbols equipped with sorting information.
		An \emph{atomic constraint} is of the form $s\triangleleft s'$ with $\triangleleft\in\{<, \leq, \equals, \not\equals, \geq, >\}$ and well-sorted base-sort terms $s, s'$ over symbols in $\Omega_\LA\cup$ $\{+, -, \cdot, /\} \cup \Omega \cup\Omega'$ in which every subterm $g(s_1, \ldots, s_m)$ with $g \in \Omega'$ is ground and where $s \triangleleft s'$ contains at most one occurrence of a base-sort variable and no free-sort variables.
		A \emph{free atom} $A$ is either of the form $s\approx s'$ with $s,s'$ being either free-sort variables or well-sorted ground terms of the free sort over function symbols in $\Omega_\LA \cup \{+,-,\cdot, /\} \cup \Omega \cup \Omega'$, or $A$ is of the form $P(s_1, \ldots, s_m)$, where $P:\xi_1\times\ldots\times\xi_m \in\Pi$ and for each $i\leq m$ the term $s_i$ is of sort $\xi_i$. If $\xi_i = \R$, then $s_i$ is a base-sort variable, and if $\xi_i = \S$, then $s_i$ is either a free-sort variable or a ground free-sort term over function symbols in $\Omega_\LA \cup \{+,-,\cdot, /\} \cup \Omega \cup \Omega'$.
		%
		%\emph{Clauses} are defined as before.
\end{definition}
As we have already sketched, we can easily transform a clause set $N$ that contains clauses over the extended syntax into an equisatisfiable set $N_1\cup N_2$ so that $N_1$ contains clauses according to Definition \ref{definition:BSRwithConstrSyntax} only and all clauses in $N_2$ are of the form $b_s \not\equals s \;\|\to\Box$ or $\|\to b_s \approx s$. Clearly, we can  proceed with $N_1$ according to Steps (\ref{enum:DecisionAlgorithm:One}) to (\ref{enum:DecisionAlgorithm:Five}) of the algorithm given in the proof of Theorem \ref{theorem:Complexity} and thus obtain the essentially ground clause set $N_1''$ that is equisatisfiable to $N_1$ and a set $\Ax$ of axioms. The construction of a hierarchic interpretation $\A$ for $N_1''$ in Step (\ref{enum:DecisionAlgorithm:Six}) can be modified so that it results in an interpretation that also covers all function symbols that occur in $N_2$ but not in $N_1$. Checking whether $\A$ is a hierarchic model of $N_1'' \cup \Ax \cup N_2$ can be done easily. Consequently, the hierarchic satisfiability problem for clause sets over the extension of the BSR fragment with ground LA bounds is decidable, too.

%%%%%%%%%%%%%%%%%%%%%%%%%%%%%%%%%%%%%%%%%%%%%%%%%%%%
%%%%%%%%%%%%%%%%%%%%%%%%%%%%%%%%%%%%%%%%%%%%%%%%%%%%
%%%%%%%%%%%%%%%%%%%%%%%%%%%%%%%%%%%%%%%%%%%%%%%%%%%%
%%%%%%%%%%%%%%%%%%%%%%%%%%%%%%%%%%%%%%%%%%%%%%%%%%%%

\section{Undecidable Fragments}\label{section:UndecidableFragments}

The existing results on undecidability of the BSR fragment extended with linear arithmetic 
either employ quantifier alternation~\cite{Halpern1991}, a complex first-order structure 
built out of several $n$-ary ($n>4$) predicates~\cite{CoxMT92}, or linear arithmetic
constraints with addition and subtraction~\cite{Fietzke2012}. 
We identify more fine-grained undecidable fragments on the basis of a
reduction of the halting problem for two-counter machines~\cite{Minsky1967} to 
the Bernays-Sch\"onfinkel-Horn (BSH) fragment extended with linear arithmetic.

In~\cite{Fietzke2012} the two-counter machine instructions are translated into clauses of the form
\begin{center}
$\begin{array}{rr@{\;\parallel\;}l@{\;\rightarrow\;}l}
(1) 	& x' = x + 1 		& M(b,x,y) & M(b',x',y)\\
(2) 	& x > 0, x' = x -1 	& M(b,x,y) & M(b',x',y)\\
 	& x = 0 & M(b,x,y) 	& M(b'',x,y)
\end{array}$
\end{center}
where $M(b,x,y)$ models the program at instruction label $b$ with counter values $x$, $y$.
Then, clause (1)~models the increment of counter $x$ (analogous for $y$) and a go to the next instruction $b'$;
clauses (2)~model the conditional decrement of $x$ (analogous for $y$) and, otherwise, a jump to instruction $b''$.
The start state is represented by a clause $\parallel \rightarrow M(\text{init},n,m)$ for two positive integer values $n$, $m$, and
the halt instruction is represented by an atom $M(\text{halt},x,y)$ and reachability of the halting state
by a clause $\parallel M(\text{halt},x,y)\rightarrow \Box$. Then, a two-counter machine program halts if and only if
the clause set constructed out of the program is unsatisfiable. Note that for this reduction it
does not matter whether integer or real arithmetic is the underlying arithmetic semantics.

As it turns out, two-counter machines can be encoded exclusively using very restricted syntax on the constraint part, 
such as difference constraints $x-y \triangleleft c$, additive constraints $x+y \triangleleft c$,  
quotient constraints $x \triangleleft c\cdot y$ (which could equivalently be written $\tfrac{x}{y} \triangleleft c$, hence the name) or 
multiplicative constraints $x \cdot y \triangleleft c$. Even more restrictive, in the case of difference constraints and quotient constraints 
only a single base-sort constant symbol is necessary. In case of quotient and multiplicative constraints, lower and upper bounds on the used variables do 
not lead to a decidable fragment -- which would be the case if we were using variables over the integers.

We start off with difference constraints. In our encoding $M(b, x, y, z)$ stands for a machine at instruction $b$ with counter 
values $i_1 = x - z - 1$ and $i_2 = y - z - 1$, where the last argument $z$ keeps track of an offset relative to which $x$ and $y$ store 
the values of the counters. Following this principle, the increment instruction for the first counter $i_1$ is encoded 
by the clause $x'-x\equals 1 \;\|\; M(b,x,y,z) \to M(b',x',y,z)$, which leaves the offset untouched. The offset is an 
appropriate tool that allows us to have a uniform syntactic structure for all atomic constraints. It is due to the offset 
encoding that we can easily use a difference constraint when checking whether a counter is zero or not. The conditional 
decrement instruction is split up in two clauses: the zero case $x-z\equals 1 \;\|\; M(b,x,y,z) \to M(b',x,y,z)$ and the 
non-zero case $x-z>1,\, y'-y\equals 1,\, z'-z\equals 1 \;\|\; M(b,x,y,z) \to M(b'',x',y,z')$. 
Hence, by undecidability of the halting problem for two-counter machines, we may conclude that satisfiability for the BSH 
fragment with difference constraints (requiring only the constant $1$ besides the input) and a single free $4$-ary predicate symbol is undecidable, too.

Encoding two-counter machines in the BSH fragment with quotient constraints works very similar. We only need to change the representation of 
counter values in a state $M(b, x, y, z)$ as follows: $i_1 = -\log_2(\tfrac{2x}{z}) = -\log_2(x) + \log_2(z) - 1$ and $i_2 = -\log_2(\tfrac{2y}{z})$. 
Incrementing the first counter is encoded by $2\cdot x'\equals x \;\|\; M(b,x,y,z) \to M(b',x',y,z)$, and the conditional decrement instruction is 
represented by $2\cdot x\equals z \;\|\; M(b,x,y,z) \to M(b',x,y,z)$ and $2\cdot x<z,\, 2\cdot y'\equals y,\, 2\cdot z'\equals z \;\|\; M(b,x,y,z) \to M(b'',x,y',z')$. 
Analogous to the case of difference constraints, we may now show undecidability of the satisfiability problem for the BSH fragment with 
quotient constraints (requiring only the constant $2$) and a single free $4$-ary predicate symbol is undecidable.
We have chosen negative exponents for the encoding of the counter values, since this guarantees that the range of the base-sort variables is bounded 
from below and above. Thus, we could restrict all base-sort variables to values within $(0, 1]$ (by adding appropriate atomic constraints to every clause), 
and still end up with an undecidable satisfiability problem. 

Having additive constraints of the form $x + y \triangleleft c$ at hand, we can simulate subtraction by defining the additive inverse using a 
constraint $x + x_- \equals  0$. To keep track of inverses, we adjust the arity of $M$ accordingly. Counter values are represented in the same way as we 
have done for difference constraints. The increment instruction for the first counter is thus encoded 
by $x' + x_- \equals 1,\, x' + x'_- \equals 0 \;\|\; M(b, x, x_-, y, y_-, z, z_-) \to M(b', x', x'_-, y, y_-, z, z_-)$. 
It is now straightforward to come up with the encoding of the conditional decrement. 		
Hence, satisfiability of the BSH fragment with additive constraints and a single free predicate symbol of arity $7$ is undecidable. 
However, this time we need two constants, namely $1$ and $0$.

In order to complete the picture, we shortly leave the realm of linear arithmetic and consider multiplicative constraints of the form $x \cdot y \triangleleft c$. 
These relate to quotient constraints as additive constraints relate to difference constraints. Hence, combining the previously used ideas of offsets and inverses, 
we can encode two-counter machines also with multiplicative constraints: 
$x \cdot x'_{-1} \equals  2,\, x' \cdot x'_{-1} \equals  1 \;\|\; M(b, x, x_{-1}, y, y_{-1}, z, z_{-1}) \to M(b', x', x'_{-1}, y, y_{-1}, z, z_{-1})$ 
encodes the increment instruction on the first counter, for instance. As in the case of quotient constraints, we could restrict the range of base-sort 
variables to $(0,1]$ by suitable constraints. Consequently, this leads to another fragment of which the satisfiability problem is undecidable.

%%%%%%%%%%%%%%%%%%%%%%%%%%%%%%%%%%%%%%%%%%%%%%%%%%%%
%%%%%%%%%%%%%%%%%%%%%%%%%%%%%%%%%%%%%%%%%%%%%%%%%%%%
%%%%%%%%%%%%%%%%%%%%%%%%%%%%%%%%%%%%%%%%%%%%%%%%%%%%
%%%%%%%%%%%%%%%%%%%%%%%%%%%%%%%%%%%%%%%%%%%%%%%%%%%%

\section{Conclusion and Future Work} \label{sec:conclusion}

Our main contributions are a proof showing that satisfiability of the BSR fragment equipped with simple bounds on real arguments is decidable
and NEXPTIME-complete, and that slight extensions to the simple bounds constraints yield undecidability. 
The complexity result is of particular interest, 
since satisfiability of the Bernays-Sch\"onfinkel fragment has already been known to be NEXPTIME-complete for more than three decades~\cite{Lewis1980}.

For the pure variable first-order
Horn fragment extended with simple bounds, DEXP\-TIME-completeness was shown in the logic programming framework~\cite{CoxMT92}.
The result relies on a dedicated, finite representation of the hull generated by the PROLOG model operator~\cite{Cox1993}. Our result relies
on a fine-grained identification of a finite set of ground instances that are sufficient to test satisfiability.
Our result can be directly used to turn instance-based automated reasoning calculi into efficient decision
procedures for the BSR (or BS) fragment extended with simple bounds~\cite{NieuwenhuisEtAl06, BaumgartnerFT08,Ruemmer08,BonacinaEtAl11,Kruglov2012}.
For example, for an SMT approach, the overall arithmetic reasoning can be encoded in polynomially
many propositional clauses and used for preprocessing. For the example at the end of Section~\ref{section:Introduction}, the
clause $\alpha_{-\infty} < 2$ reduces all atoms $\alpha_{-\infty}<7$, $\alpha_{-\infty}\leq 2$, $\alpha_{-\infty}\neq 5$ to true and this can even directly
be build into an instantiation process.

Finite domain fragments such as the BSR fragment equipped with simple bounds have many real-world applications.
Firstly, they enable up to exponentially more compact representations compared to ground first-order, SMT, or propositional formulae, e.g.,~\cite{EmmerKKV10}.
Secondly, they result naturally out of rich ontology languages such as OWL (\url{www.w3.org}) or YAGO(2,3)~\cite{SuchanekKW07} 
where simple bounds may represent comparisons
to time points or comparisons on finite set cardinalities. When modeling technical systems, guards in the form
of simple bounds are an inherent part for describing variance of the system, e.g.,  \cite{DhunganaEtAl13} or its behavior, e.g., \cite{AlurDill94,AlurCHHHNOSY95,BiereCCSZ03}.

The detailed analysis of necessary  ground instances carried out for arithmetic constraints in Section~\ref{section:BaseModelTheory}
can also be done for the free first-order part. It can further reduce the number of ground instances generated out of free-sort constants.
However, we did not do so for the sake of simplicity and brevity.

It turned out that we leave the realm of decidability as soon as we add elementary operations $\circ$ on the reals to our constraint language, 
even if we restrict the free part to Horn clauses over the Bernays-Sch\"onfinkel fragment. Constraints of the form $x\circ y \triangleleft c$ are 
sufficient to obtain an undecidable fragment, where $\circ$ can stand for addition, subtraction, multiplication and division, $\triangleleft$ represents 
standard relations $<, \leq, \equals, \not\equals, \geq, >$, and $c$ is a real-valued constant.
This observation nicely complements our main result, since it quite clearly highlights the limits of decidability in this context.
Moreover, it reveals some interplay between real-sorted constraints and the free first-order part. For instance, difference logic (boolean combinations of propositional 
variables and existentially quantified constraints $x - y \triangleleft c$ with $c$ being a real-valued constant and $\triangleleft \in \{<, \leq\}$) is known 
to be decidable~\cite{Mahfoudh2002}. However, we have seen in Section~\ref{section:UndecidableFragments} that its combination with the 
Bernays-Sch\"onfinkel-Horn fragment is sufficient to formalize two-counter machines.

Although the obvious options for further extending the constraint language lead to undecidability there might still be room for improvement. We leave it as future 
work to investigate the BSR fragment with simple bounds plus constraints of the form $x \triangleleft y$ with $x, y$ being real-valued variables 
and $\triangleleft \in \{<, \leq, \equals, \not\equals, \geq, >\}$. On the other hand, it is conceivable to combine other decidable free first-order 
fragments with simple bounds, preferably ones satisfying the finite model property such as the monadic fragment.
As we have already pointed out, a natural next step for us will be to devise useful decision procedures for the BSR 
fragment with simple bounds that perform well in practice.

%%%%%%%%%%%%%%%%%%%%%%%%%%%%%%%%%%%%%%%%%%%%%%%%%%%%
%%%%%%%%%%%%%%%%%%%%%%%%%%%%%%%%%%%%%%%%%%%%%%%%%%%%
%%%%%%%%%%%%%%%%%%%%%%%%%%%%%%%%%%%%%%%%%%%%%%%%%%%%
%%%%%%%%%%%%%%%%%%%%%%%%%%%%%%%%%%%%%%%%%%%%%%%%%%%%

\noindent
\paragraph{Acknowledgments:}
	This work was partially supported by the German Transregional Collaborative Research Center \mbox{SFB/TR 14 AVACS}.

%%%%%%%%%%%%%%%%%%%%%%%%%%%%%%%%%%%%%%%%%%%%%%%%%%%%
%%%%%%%%%%%%%%%%%%%%%%%%%%%%%%%%%%%%%%%%%%%%%%%%%%%%
%%%%%%%%%%%%%%%%%%%%%%%%%%%%%%%%%%%%%%%%%%%%%%%%%%%%
%%%%%%%%%%%%%%%%%%%%%%%%%%%%%%%%%%%%%%%%%%%%%%%%%%%%

\newpage
\appendix

%%%%%%%%%%%%%%%%%%%%%%%%%%%%%%%%%%%%%%%%%%%%%%%%%%%%%%
%%%%%%%%%%%%%%%%%%%%%%%%%%%%%%%%%%%%%%%%%%%%%%%%%%%%%%
\section*{Appendix}

%%%%%%%%%%%%%%%%%%%%%%%%%%%%%%%%%%%%%%%%%%%%%%%%%%%%%%
\subsection*{Details related to instantiation of Base-Sort Variables (Section \ref{section:BaseModelTheory})}
%%%%%%%%%%%%%%%%%%%%%%%%%%%%%%%%%%%%%%%%%%%%%%%%%%%%%%

\begin{proposition}\label{proposition:ImproperAxiomSatisfiability}
	Let $\I$ be a nonempty set of instantiation points and let $\C$ be a nonempty set of base-sort constant symbols from $\Omega_\LA$. Given an arbitrary hierarchic interpretation $\A$, we can construct a hierarchic model $\B$ so that (ii) $\B\models \Ax_{\I, \C}$ and $\B$ and $\A$ differ only in the interpretation of the constant symbols in $\aconsts(\Ax_{\I, \C})$.
\end{proposition}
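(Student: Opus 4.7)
The plan is to keep $\A$'s interpretation fixed everywhere except on the constants in $\aconsts(\Ax_{\I,\C})$, and to choose new values for those so that all four axiom families in $\Ax_{\I,\C}$ become true. Assuming, as in the intended applications, that $\C$ is finite (so that the sets of real images used below have minima), the construction proceeds in two stages, with the equality axioms dictating how to group constants together before assigning values.

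First, if $\alpha_{-\infty}\in\I$, set $\alpha_{-\infty}^\B$ to any real strictly below $\min\{c^\A \mid c\in\C\}$; such a value exists because $(\Real,<)$ has no least element. This immediately validates the first axiom family $\{\alpha_{-\infty} < c \mid c\in\C\}$. For the remaining constants $\alpha_{d+\varepsilon}\in\aconsts(\Ax_{\I,\C})$, I would partition them into classes by the value $d^\A$: two such constants $\alpha_{d+\varepsilon}$ and $\alpha_{d'+\varepsilon}$ lie in the same class $[r]$ iff $d^\A = d'^\A = r$. For each class $[r]$ I choose a single witness $v_r$ inside the open interval $(r, s_r)$, where $s_r := \min\{c^\A \mid c\in\C,\, c^\A > r\}$ (and $s_r := +\infty$ if that set is empty). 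Density of $(\Real,<)$ handles the case where $s_r$ is finite, while the absence of maximal elements handles the case $s_r = +\infty$; together they ensure $(r, s_r)\neq\emptyset$, while finiteness of $\C$ ensures $s_r$ is well-defined whenever it is real. I then set $\alpha_{d+\varepsilon}^\B := v_{d^\A}$ uniformly for every constant in the class.

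The remaining verifications are routine: $d < \alpha_{d+\varepsilon}$ holds because $v_{d^\A} > d^\A$; the implication $d < c \,\to\, \alpha_{d+\varepsilon} < c$ holds because $d^\A < c^\A$ yields $c^\A \geq s_{d^\A} > v_{d^\A}$; and $d\equals c \,\to\, \alpha_{d+\varepsilon}\equals\alpha_{c+\varepsilon}$ holds because $d^\A = c^\A$ places both constants into the same class and assigns them the common value $v_{d^\A}$. The main obstacle---and the reason to partition first rather than pick each $\alpha_{d+\varepsilon}^\B$ independently---is exactly this last family of equalities: ad hoc independent choices would almost certainly violate it. Grouping by $d^\A$ up front and then invoking density inside each feasible open interval $(r, s_r)$ resolves all four families simultaneously, and by construction $\B$ agrees with $\A$ outside $\aconsts(\Ax_{\I,\C})$.
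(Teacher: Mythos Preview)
Your proposal is correct and follows essentially the same approach as the paper. The paper's proof simply gives explicit concrete choices---namely $\alpha_{-\infty}^\B := \min\{c^\A \mid c\in\C\} - 1$ and $\alpha_{d+\varepsilon}^\B := \tfrac{1}{2}\bigl(d^\A + \min(\{c^\A \mid c\in\C,\, c^\A > d^\A\}\cup\{d^\A+1\})\bigr)$---which are particular instances of your abstract description; since the paper's formula for $\alpha_{d+\varepsilon}^\B$ depends only on $d^\A$, it implicitly realizes exactly the partitioning-by-value idea you spell out to handle the equality axioms.
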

\begin{proof}
	 The model $\B$ can be derived from $\A$ by redefining the interpretation of all constant symbols $\alpha_\t \in \I$ as follows:
	$\alpha_{-\infty}^\B := \min\{c^\A \mid c\in\C\} - 1$ and $\alpha_{d+\varepsilon}^\B := \tfrac{1}{2} \bigl( d^\A + \min(\{c^\A \mid c\in\C,\; c^\A > d^\A\}\cup \{d^\A+1\}) \bigr)$.
	 Everything else is taken over from $\A$.
\end{proof}

%%%%%%%%%%%%%%%%%%%%%%%%%%%%%%%%%%%%%%%%%%%%%%%%%%%%%%%
In order to have a compact notation at hand when doing case distinctions on intervals, we use ``$\ldbrace$''  to stand for ``$($'' as well as for ``$[$''. Analogously, ``$\rdbrace $'' is used to address the two cases ``$)$'' and ``$]$'' at the same time.
%%%%%%%%%%%%%%%%%%%%%%%%%%%%%%%%%%%%%%%%%%%%%%%%%%%%%%%
\begin{lemma}\label{lemma:PartitionRepresentatives}
	Let $N$ be a clause set in normal form and let $\A$ be some hierarchic model of $\Ax_{\I_{[\<P,i\>],N}, \bconsts(N)}$ for an arbitrary argument position pair $\<P,i\>$.
	For every part $p\in \P_{[\<P,i\>], N, \A}$ we find some constant symbol $c_{[\<P,i\>], p} \in \I_{[\<P,i\>], N}$ so that $c_{[\<P,i\>], p}^\A \in p$ and
	\begin{enumerate}[label=(\roman{*}), ref=(\roman{*})]
		\item\label{enum:PartitionRepresentatives:One} if $p = (-\infty, r_u\rdbrace $ or $p = (-\infty, +\infty)$, then $c_{[\<P,i\>], p}^\A < d^\A$ for every $d\in\bconsts(N)$,
		\item\label{enum:PartitionRepresentatives:Two} if $p = [r_\ell, r_u\rdbrace $ or $p = [r_\ell, +\infty)$, then $c_{[\<P,i\>], p}^\A = r_\ell$.
		\item\label{enum:PartitionRepresentatives:Three} if $p = (r_\ell, r_u\rdbrace $ or $p = (r_\ell, +\infty)$, then $r_\ell < c_{[\<P,i\>], p}^\A$ and $c_{[\<P,i\>], p}^\A < d^\A$ for every $d\in\bconsts(N)$ with $r_\ell < d^\A$.
	\end{enumerate}
\end{lemma}
\begin{proof}
	We proceed by case distinction on the form of the interval $p$.
	\begin{description}
		\item Case $p = (-\infty, r_u\rdbrace $ for some real value $r_u$. 
			By construction of $\P_{[\<P,i\>], N, \A}$, there is a constant symbol $e\in\bconsts(N)$ so that $e^\A = r_u$.		
			Moreover, we find the instantiation point $\alpha_{-\infty}$ in $\I_{[\<P,i\>], N}$ and thus $\Ax_{\I_{[\<P,i\>],N}, \bconsts(N)}$ contains the axiom $\alpha_{-\infty} < d$ for every constant symbol $d\in\bconsts(N)$, in particular $\alpha_{-\infty} < e$. Hence, $\alpha_{-\infty}^\A \in (-\infty, r_u)$.

		\item Case $p = (-\infty, +\infty)$. 
			We find the instantiation point $\alpha_{-\infty}$ in $\I_{[\<P,i\>], N}$, and obviously $\alpha_{-\infty}^\A$ lies in $p$. Moreover, $\Ax_{\I_{[\<P,i\>],N}, \bconsts(N)}$ contains the axiom $\alpha_{-\infty} < d$ for every constant symbol $d\in\bconsts(N)$.

		\item Case $p = [r_\ell, r_u\rdbrace $ for real values $r_\ell, r_u$ with $r_\ell \leq r_u$.
			\begin{description}
				\item If $p$ is a point interval $[r_\ell, r_\ell]$, then we find a constant symbol $e\in\I_{[\<P,i\>], N}\cap\Omega_\LA$ so that $p = \{e^\A\}$.
			
				\item If $p$ is not a point interval, i.e.\ $r_\ell < r_u$, then the definition of $\P_{[\<P,i\>], N, \A}$ entails the existence of a constant symbol $e \in \bconsts(N)$ such that $e^\A = r_\ell$ and either $e \in \I_{[\<P,i\>], N}$ or $\alpha_{e+\varepsilon}\in\I_{[\<P,i\>], N}$. But since $p$ is not a point interval, $r_\ell + \varepsilon$ cannot be $\A$-covered by $\I_{[\<P,i\>],N}$, and thus $\alpha_{e+\varepsilon}$ cannot be in $\I_{[\<P,i\>],N}$. Consequently, $e$ must be in $\I_{[\<P,i\>],N}$ and we can choose $c_{[\<P,i\>], p} := e$.
			\end{description}
			
		\item The case of $p = [r_\ell, +\infty)$ can be argued analogously to the previous case.

		\item Case $p = (r_\ell, r_u\rdbrace $ for real values $r_\ell, r_u$ with $r_\ell < r_u$.
		
			By construction of $\P_{[\<P,i\>], N, \A}$, we find some instantiation point $\alpha_{e+\varepsilon}\in \I_{[\<P,i\>], N}$ so that $e^\A = r_\ell$. Moreover, there exists a second constant symbol $e'\in\bconsts(N)$ for which $e'^\A = r_u$.			
			As a consequence of $\alpha_{e+\varepsilon}\in \I_{[\<P,i\>], N}$ the set $\Ax_{\I_{[\<P,i\>],N}, \bconsts(N)}$ contains the axioms $e < \alpha_{e+\varepsilon}$ and $e<d \to \alpha_{e+\varepsilon} < d$ for every constant symbol $d\in\bconsts(N)$, in particular for $d=e'$. And since $\A$ is a model of $\Ax_{\I_{[\<P,i\>],N}, \bconsts(N)}$, requirement \ref{enum:PartitionRepresentatives:Three} is satisfied and furthermore $\alpha_{e+\varepsilon}^\A\in p$ follows, since $e^\A < \alpha_{e+\varepsilon}^\A < e'^\A$.
			
		\item Case $p = (r_\ell, +\infty)$. There must be an instantiation point $\alpha_{e+\varepsilon}$ in $\I_{[\<P,i\>],N}$ such that $e^\A = r_\ell$.
			As we have already done in Definition \ref{definition:BaseVariableInstantiationPoints}, we denote by $r_1, \ldots, r_k$ all real values in ascending order which $\A$ assigns to constant symbols $d\in\bconsts(N)$ that either themselves are instantiation points in $\I_{[\<P,i\>],N}$ or for which $\alpha_{d+\varepsilon}$ occurs in $\I_{[\<P,i\>],N}$.	 
			By construction of $\P_{[\<P,i\>], N, \A}$, we know $r_\ell = r_k >  r_{k-1} > \ldots > r_1$, and thus there is no base-sort constant symbol $e'$ in $\bconsts(N)$ fulfilling $r_\ell < e'^\A$. Consequently, \ref{enum:PartitionRepresentatives:Three} is satisfied if we choose $c_{[\<P,i\>], p} := \alpha_{e+\varepsilon}$. Finally, we may conclude $\alpha_{e+\varepsilon}^\A \in p$, because $\Ax_{\I_{[\<P,i\>],N}, \bconsts(N)}$ contains the axiom $e < \alpha_{e+\varepsilon}$.
			\qedhere
	\end{description}
\end{proof}

%%%%%%%%%%%%%%%%%%%%%%%%%%%%%%%%%%%%%%%%%%%%%%%%%%%%%%%%%%%%%%%%%
%%%%%%%%%%%%%%%%%%%%%%%%%%%%%%%%%%%%%%%%%%%%%%%%%%%%%%%%%%%%%%%%%
\subsubsection*{Proof of Lemma \ref{lemma:EquisatisfiabilityBaseInstantiation}}
\begin{proof}
	We will reuse the notation ``$\ldbrace$'' and ``$\rdbrace$'' that we have defined right before Lemma \ref{lemma:PartitionRepresentatives}.

	The ``only if''-part is almost trivial to show:
			Let $\A$ be a hierarchic model of $N$. 			
			Every axiom in $\Ax_{\I_{\APC_N(x), N}, \bconsts(N)}$ which does not appear in $N$ concerns constant symbols $\alpha_\t$ that do not occur in $N$. By virtue of Proposition \ref{proposition:ImproperAxiomSatisfiability}, we can derive a hierarchic model $\B$ from $\A$ that differs from $\A$ only in how these $\alpha_\t$ are interpreted and that is a hierarchic model of $\Ax_{\I_{\APC_N(x), N}, \bconsts(N)} \setminus N$. Thus, $\B\models N$ follows from $\A\models N$.			
			Since the variable $x$ is universally quantified in $C$, $\B\models C$ entails $\B\models C\subst{x}{c}$ for every instance $C\subst{x}{c}$.			
			Altogether, we obtain $\B\models \widehat{N}_x$.
			
	The ``if''-part requires a more sophisticated argument:
		Let $\A$ be a hierarchic model of $\widehat{N}_x$.
		We use $\A$ to construct the hierarchic model $\B$ as follows.
		For the domain $\S^\B$ we reuse $\A$'s free domain $\S^\A$. 
		For all base-sort and free-sort constant symbols $c\in \consts(N)$, we set $c^{\B} := c^{\A}$.
		For every predicate symbol $P:\xi_1\times\ldots\times\xi_m \in \Pi$ that occurs in $N$ and for every argument position $i$, $1\leq i\leq m$, Lemma \ref{lemma:PartitionRepresentatives} guarantees the existence of a base-sort constant symbol $c_{[\<P,i\>], p}\in \I_{\APC_N(x), N}$ for every interval $p\in \P_{[\<P,i\>], N, \A}$, such that $c_{[\<P,i\>], p}^\A \in p$ and the requirements \ref{enum:PartitionRepresentatives:One} to \ref{enum:PartitionRepresentatives:Three} of the lemma are met.
		Based on this, we define the family of functions $\varphi_{[\<P,i\>]} : \Real\cup\S^\B \to \Real\cup\S^\A$ by
					\[ \varphi_{[\<P,i\>]}(a) :=
						\begin{cases}
							c_{[\<P,i\>], p}^\A		&\text{if $\xi_i = \R$ and $p\in \P_{[\<P,i\>], N, \A}$}\\
											&\text{is the interval $a$ lies in,} \\
							a				&\text{if $\xi_i = \S$.}
						\end{cases}
					\]
		Using the functions $\varphi_{[\<P,i\>]}$, we define the $P^\B$ so that for all domain elements $a_1,\ldots,a_m$ of appropriate sorts $\bigl\<a_1, \ldots, a_m\bigr\> \in P^{\B}$ if and only if $\bigl\<\varphi_{[\<P,1\>]}(a_1), \ldots,$ $\varphi_{[\<P,m\>]}(a_m)\bigr\> \in P^\A$.

		We next show $\B\models N$. Consider any clause $C' = \Lambda' \;\|\; \Gamma' \to \Delta'$ in $N$ and let $\beta : V_\R\cup V_\S\to \Real\cup \S^\B$ be an arbitrary variable assignment. 
		From $\beta$ we derive a special variable assignment $\beta_\varphi$ for which we will infer $\A,\beta_\varphi \models C'$ as an intermediate step:
			\[ \beta_\varphi(v) := 
					\begin{cases}
						c_{\APC_N(v), p}^\A	&\text{if $v\in V_\R$ and $\beta(v)\in p \in \P_{\APC_N(v), N, \A}$,}\\
						\beta(v)			&\text{if $v\in V_\S$,}
					\end{cases}
			\]
		for every variable $v$.
		If $C' \neq C$, then $\widehat{N}_x$ already contains $C'$, and thus $\A,\beta_\varphi\models C'$ must hold.
		In case of $C' = C$, let $p_*$ be the interval in $\P_{\APC_N(x), N, \A}$ containing the value $\beta(x)$, and let $c_*$ be an abbreviation for $c_{\APC_N(x), p_*}$. Due to $\beta_\varphi(x) = c_*^\A$ and since $\A$ is a model of the clause $C\subst{x}{c_*}$ in $\widehat{N}_x$, we conclude $\A,\beta_\varphi \models C$. 
		Hence, in any case we can deduce $\A,\beta_\varphi\models C'$. By case distinction on why $\A,\beta_\varphi \models C'$ holds, we may transfer this result to obtain $\B,\beta\models C'$, too.
		\begin{description}
			\item Case $\A, \beta_\varphi \not\models c\triangleleft d$ for some atomic constraint $c\triangleleft d$ in $\Lambda'$ with base-sort constant symbols $c, d \in \Omega_\LA\cup\Omega_\LA^\alpha$ and $\triangleleft \in \{<, \leq, \equals, \not\equals, \geq, >\}$. Since $\B$ and $\A$ interpret constant symbols in the same way and independently  of a variable assignment, we immediately get $\B, \beta\not\models c\triangleleft d$.
			
			\item Case $\A,\beta_\varphi \not\models (y\triangleleft d) \in \Lambda'$ for an arbitrary base-sort variable $y$ and a constant symbol $d\in\Omega_\LA$. This translates to $\beta_\varphi(y) {\not\!\triangleleft}\, d^\A$. Let $p\in\P_{\APC_N(y), N, \A}$ be the interval which contains $\beta(y)$ and therefore also $\beta_\varphi(y)$. 
				\begin{description}
					\item If $d^\A$ lies outside of $p$, then $\beta_\varphi(y) \triangleleft d^\A$ if and only if $\beta(y) \triangleleft d^\A$, since $\beta_\varphi(y) \in p$. Thus, $d^\B = d^\A$ entails $\B, \beta \not\models y\triangleleft d$.
					
					\item If $p$ is the point interval $p = \{d^\A\}$, then $\beta(y) = \beta_\varphi(y)$, and thus $\B,\beta \not\models y\triangleleft d$.
					
					\item If $p = \ldbrace r_\ell, r_u\rdbrace $ and $r_\ell < d^\A \leq r_u$, then $\triangleleft \not\in \{<, \leq, \not\equals\}$, since $\beta_\varphi(y) = c_{\APC_N(y), p}^\A < d^\A$ (by \ref{enum:PartitionRepresentatives:Two} and \ref{enum:PartitionRepresentatives:Three} of Lemma \ref{lemma:PartitionRepresentatives}). Moreover, we conclude $d\not\in \I_{\APC_N(y), N}$, since otherwise $p$ would be of the form $p = [d^\A, r_u\rdbrace $ by the construction of $\P_{\APC_N(y),N,\A}$ (requirements (\ref{enum:BaseVariableInstantiationPoints:Two}), \ref{enum:BaseVariableInstantiationPoints:ThreeThree} and \ref{enum:BaseVariableInstantiationPoints:ThreeFour}). Therefore, $\triangleleft \not\in \{\equals, \geq\}$, since otherwise the instantiation point $d$ would be in $\I_{\APC_N(y), N}$. This only leaves $\triangleleft = {>}$. Hence, the constraint $y > d$ occurs in $N$ and thus we find the instantiation point $\alpha_{d+\varepsilon}$ in $\I_{\APC_N(y), N}$. Consequently, requirements \ref{enum:BaseVariableInstantiationPoints:ThreeTwo} and \ref{enum:BaseVariableInstantiationPoints:ThreeFour} of the construction of $\P_{\APC_N(y),N,\A}$ entail $p = \ldbrace r_\ell, d^\A]$, which leads to $\beta(y) \leq d^\A$ because of $\beta(y)\in p$. Therefore, $\B,\beta \not\models y > d$ must be true.
					
						The cases $p = \ldbrace r_\ell, +\infty)$, $p = (-\infty, r_u\rdbrace $ and $(-\infty, +\infty)$ with $r_\ell < d^\A \leq r_u$ can be handled by similar arguments.

					\item If $p = [d^\A, r_u\rdbrace $ and $d^\A < r_u$, then $\beta_\varphi(y) = c_{\APC_N(y), p}^\A = d^\A$ by \ref{enum:PartitionRepresentatives:Two} of Lemma \ref{lemma:PartitionRepresentatives}. Consequently, \mbox{$\triangleleft \not\in \{\leq, \equals, \geq\}$}. We conclude $\alpha_{d+\varepsilon} \not\in \I_{\APC_N(y), N}$, because otherwise $[d^\A, r_u\rdbrace $ would be a point interval, contradicting $d^\A < r_u$. Hence, the only remaining possibility is $\triangleleft = {<}$. But by $\beta(y)\in p$ we deduce $\beta(y) \geq d^\A$. Therefore, we clearly get $\B,\beta \not\models y < d$.
					
						The case $p = [d^\A, +\infty)$ is covered by analogous arguments.
				\end{description}
				
			\item Case $\A, \beta_\varphi \not\models (y\equals \alpha_\t) \in \Lambda'$ for an arbitrary variable $y$ and a constant symbol $\alpha_\t\in\Omega_\LA^\alpha$.
				Let $p\in\P_{\APC_N(y), N, \A}$ be the interval which contains $\beta(y)$. 
				We immediately conclude $\alpha_\t^\A \neq c_{\APC_N(y),p}^\A$ since $c_{\APC_N(y),p}^\A = \beta_\varphi(y) \neq \alpha_\t^\A$. 
				\begin{description}
					\item If $\alpha_\t^\A$ does not lie in $p$, then $\beta(y) \neq \alpha_\t^\A$, and thus $\B,\beta \not\models y\equals \alpha_\t$.
					\item Assume $\alpha_\t^\A$ lies in $p$. 
						By $c_{\APC_N(y),p}^\A \in p$ and the facts $\alpha_\t^\A \neq c_{\APC_N(y),p}^\A$ and $\alpha_\t^\A \in p$, $p$ cannot be a point interval.						
						Moreover, $p$ cannot be of the form $(-\infty, r_u\rdbrace $, since then $c_{\APC_N(y),p} = \alpha_{-\infty}$ and $r_u < \alpha_{c+\varepsilon}^\A$ for all $\alpha_{c+\varepsilon} \in \aconsts(N)$ would follow by construction of $\P_{\APC_N(y),N,\A}$ (since $\A\models \Ax_{\aconsts(N), \bconsts(N)}$), contradicting either $\alpha_\t \neq c_{\APC_N(y),p}$ or $\alpha_\t^\A \in p$.							
						   Consequently, there exists a constant symbol $d\in\bconsts(N)$ so that $\alpha_\t = \alpha_{d+\varepsilon}$. Hence, the atomic constraint $y\equals \alpha_{d+\varepsilon}$ contributes the instantiation point $\alpha_{d+\varepsilon}$ to $\I_{\APC_N(y),N}$.
						
						The constant symbol $c_{\APC_N(y),p}$ cannot stem from $\Omega_\LA$, since then $p$ would be of the form $[c_{\APC_N(y),p}^\A, +\infty)$ or $[c_{\APC_N(y),p}^\A, r_u\rdbrace $ for some real value $r_u$ and we would have $c_{\APC_N(y),p}^\A = d^\A$, i.e.\ $c_{\APC_N(y),p}^\A$ and $c_{\APC_N(y),p}^\A + \varepsilon$ would be $\A$-covered by $\I_{\APC_N(y), N}$ -- a contradiction, as we have already argued then $p$ cannot be a point interval.
						Hence, there exists a constant symbol $e\in\bconsts(N)$ so that $e \neq d$ and $c_{\APC_N(y),p} = \alpha_{e+\varepsilon}$ and thus $p$ is either of the form $(e^\A, +\infty)$ or $(e^\A, r_u\rdbrace $ for some real value $r_u$. 
						In case of $d^\A \neq e^\A$, the instantiation point $\alpha_{d+\varepsilon}$ results in a part $p' \in \P_{\APC_N(y),N,\A}$ such that $\alpha_{d+\varepsilon}^\A \in p'$ and $p' \neq p$. But this contradicts our assumption $\alpha_\t^\A = \alpha_{e+\varepsilon}^\A \in p$.					
						Consequently, $d^\A = e^\A$. As $\widehat{N}_x$ contains $\Ax_{\{\alpha_{d+\varepsilon}\}, \{e\}}$ as a subset, $\A$ must be a hierarchic model of the axiom $d \equals e \to \alpha_{d+\varepsilon} \equals \alpha_{e+\varepsilon}$. This yields a contradiction, since above we concluded $\alpha_{d+\varepsilon}^\A = \alpha_\t^\A \neq c_{\APC_N(y),p}^\A = \alpha_{e+\varepsilon}^\A$.						 
			  	\end{description}

			\item Case $\A, \beta_\varphi \not\models s\approx s'$ for some free atom $s\approx s' \in \Gamma'$. Hence, $s$ and $s'$ are either variables or constant symbols of the free sort, which means they do not contain subterms of the base sort. Since $\B$ and $\A$ behave identical on free-sort constant symbols and $\beta(u) = \beta_\varphi(u)$ for any variable $u\in V_\S$, it must hold $\B, \beta \not\models s\approx s'$.

			\item Case $\A, \beta_\varphi \models s\approx s'$ for some $s\approx s' \in \Delta'$. Analogous to the above case, $\B, \beta \models s\approx s'$ holds.
			
			\item Case $\A, \beta_\varphi \not\models P(s_1, \ldots, s_m)$ for some free atom $P(s_1, \ldots, s_m)\in\Gamma'$. This translates to  $\bigl\<\A(\beta_\varphi)(s_1),$ $\ldots, \A(\beta_\varphi)(s_m)\bigr\> \not\in P^\A$.
				\begin{description}
					\item Every $s_i$ of the free sort is either a constant symbol or a variable. Thus, we have $\A(\beta_\varphi)(s_i) = \B(\beta)(s_i) = \varphi_{[\<P,i\>]}(\B(\beta)(s_i))$, since free-sort constant symbols are interpreted in the same way by $\A$ and $\B$, and because $\beta_\varphi(u) = \beta(u)$ for every free-sort variable $u$.					
					\item Every $s_i$ that is of the base sort must be a variable. Hence, $\A(\beta_\varphi)(s_i) = c_{[\<P,i\>],p}^\A = \varphi_{[\<P,i\>]}(\B(\beta)(s_i))$, where $p$ is the interval in $\P_{[\<P,i\>], N, \A}$ which contains $\beta(s_i)$ (and thus also $\beta_\varphi(s_i)$) and where we have $\APC_N(s_i) = [\<P,i\>]$.
				\end{description}
			Put together, this yields $\bigl\<\varphi_{[\<P,1\>]}(\B(\beta)(s_1)), \ldots, \varphi_{[\<P,m\>]}(\B(\beta)(s_m))\bigr\> \not\in P^\A$. But then, by construction of $\B$, we have $\bigl\<\B(\beta)(s_1), \ldots, \B(\beta)(s_m)\bigr\> \not\in P^\B$, which entails $\B, \beta \not\models P(s_1, \ldots, s_m)$.					
													
			\item Case $\A, \beta_\varphi \models P(s_1, \ldots, s_m)$ for some free atom $P(s_1, \ldots, s_m)\in\Delta'$. Analogous to the above case we conclude $\B, \beta \models P(s_1, \ldots, s_m)$.
		\end{description}
		Altogether, we have shown $\B\models N$.
\end{proof}

%%%%%%%%%%%%%%%%%%%%%%%%%%%%%%%%%%%%%%%%%%%%%%%%%%%%%%%
\subsection*{Details related to instantiation of Free-Sort Variables (Section \ref{section:BaseModelTheory})}
%%%%%%%%%%%%%%%%%%%%%%%%%%%%%%%%%%%%%%%%%%%%%%%%%%%%%%%

\begin{lemma}\label{lemma:FreeDomain}
	Let $N$ be a clause set in normal form and let $\A$ be a hierarchic model of $N$. We assume $\fconsts(N)$ to contain at least one constant symbol (otherwise we may add the tautology $\|\to c\approx c$ to $N$).
	By $\widehat{\S^\A}$ we denote the restricted domain $\{a \in \S^\A \mid \text{there is a $d\in\fconsts(N)$ such that $a = d^\A$}\}$.
	Let $\sim_\A$ be the binary relation on $\fconsts(N)$ satisfying $c \sim_\A d$ if and only if $c^\A = d^\A$.
	
	We can construct a hierarchic model $\B$ of $N$ which meets the following requirements:
	\begin{enumerate}[label=(\roman{*}), ref=(\roman{*})]
		\item\label{enum:FreeDomain:One} the domain $\S^\B$ is the set $\fconsts(N)/_{\sim_\A}$ of equivalence classes w.r.t.\ $\sim_\A$;
		\item\label{enum:FreeDomain:Two} $d^\B = [d]_{\sim_\A}$ for every free-sort constant symbol $d\in\fconsts(N)$;
		\item\label{enum:FreeDomain:Three} $c^\B = c^\A$ for every base-sort constant symbol $c\in\bconsts(N)\cup\aconsts(N)$; and
		\item\label{enum:FreeDomain:Four} for every atom $A$ in $N$ and every variable assignment $\beta : V_\S \cup V_\R \to \S^\A\cup \Real$ for which $\beta(u) \in \widehat{S^\A}$ for all $u\in V_\S$, we have $\A, \beta\models A$ if and only if $\B, \beta_{\sim_\A} \models A$, where for any variable $v \in V_\R\cup V_\S$ we set
			\[ \beta_{\sim_\A}(v) := 
					\begin{cases} 
						[d]_{\sim_\A}	&\text{if $v\in V_\S$ and $\beta(v) = d^\A$}\\
									&\text{for some $d\in\fconsts(N)$,}\\
						\beta(v)		&\text{if $v\in V_\R$.}
					\end{cases}
			\]
	\end{enumerate}
\end{lemma}
\begin{proof}
	We construct the hierarchic interpretation $\B$ as follows:
	\begin{itemize}
		\item $S^\B := \fconsts(N)/_{\sim_\A}$.
		\item For every constant symbol $d\in\fconsts(N)$, we set $d^\B := [d]_{\sim_\A}$.
		\item For every constant symbol $c\in\bconsts(N)\cup\aconsts(N)$, we set $c^\B := c^\A$.
	
		\item To help the formulation of the interpretation of predicate symbols under $\B$ we first define the function $\psi : \S^\B \cup \Real \to \S^\A \cup \Real$ by
			\[ \psi(a) :=
				\begin{cases}
					d^\A		&\text{if $a = [d]_{\sim_\A} \in \S^\B$ for some free-sort}\\ 
							&\text{constant symbol $d$,}\\
					a		&\text{if $a\in \Real$.}
				\end{cases}
			\]
			(Please note that this is well-defined, since $d \sim_\A d'$ entails $d^\A = d'^\A$.)
			
			For every predicate symbol $P/m$ that occurs in $N$ and for all arguments $a_1, \ldots, a_m$ of appropriate sorts, we define the interpretation of $P$ under $\B$ such that $\bigl\<a_1, \ldots, a_m\bigr\> \in P^{\B}$ if and only if $\bigl\< \psi(a_1), \ldots, \psi(a_m) \bigr\>\in P^\A$.
	\end{itemize}
	Obviously, requirements \ref{enum:FreeDomain:One}, \ref{enum:FreeDomain:Two} and \ref{enum:FreeDomain:Three} are satisfied.
	Due to the fact that we find a $\beta$ whose image is a subset of $\widehat{S^\A}\cup\Real$ (as described in \ref{enum:FreeDomain:Four}) for every variable assignment $\gamma : V_\R\cup V_\S \to \S^\B\cup\Real$ so that $\beta_{\sim_\A} = \gamma$, a proof of \ref{enum:FreeDomain:Four} entails $\B\models N$. Hence, it remains to show \ref{enum:FreeDomain:Four}.
	\begin{description}
		\item Suppose $A$ is an atomic constraint $s\triangleleft t$. The definitions of $\B$ and $\beta_{\sim_\A}$ immediately imply the equivalence of $\A,\beta\models A$ and $\B,\beta_{\sim_\A}\models A$.
		
		\item Suppose $A$ is of the form $s \approx t$, $s$ and $t$ being variables or constant symbols of the free sort, respectively. Because of $\A(\beta)(s), \A(\beta)(t)\in \widehat{S^\A}$, there exist constant symbols $d_s, d_t \in \fconsts(N)$ such that $d_s^\A = \A(\beta)(s)$ and $d_t^\A = \A(\beta)(t)$. Consequently, $\A(\beta)(s) = \A(\beta)(t)$ holds if and only if $d_s \sim_\A d_t$, i.e.\ if and only if $[d_s]_{\sim_\A} = [d_t]_{\sim_\A}$. Put differently, $\A, \beta\models s \approx t$ is equivalent to $\B,\beta_{\sim_\A}\models s \approx t$.

		\item Suppose $A$ is of the form $P(t_1, \ldots, t_m)$ for some predicate symbol $P/m$ occurring in $N$. As we have already argued in the previous case, we can find for every $t_i$ of the free sort a constant symbol $d_i$ fulfilling $d_i^\A = \A(\beta)(t_i)$. 
			\begin{description}
				\item If $t_i = c\in\fconsts(N)$, then $\psi(\B(\beta_{\sim_\A})(c)) = \psi([c]_{\sim_\A}) = c^\A = \A(\beta)(c)$.
				\item If $t_i = u\in V_\S$, then $\psi(\B(\beta_{\sim_\A})(u)) = \psi(\beta_{\sim_\A}(u)) = \psi([d_i]_{\sim_\A}) = d_i^\A = \beta(u) = \A(\beta)(u)$.
			\end{description}	
			
			On the other hand, for all $t_i$ of the base sort, we get $\psi(\B(\beta_{\sim_\A})(t_i)) = \B(\beta_{\sim_\A})(t_i) = \A(\beta)(t_i)$.
			
			Altogether, this leads to 
				\[ \bigl\< \psi\bigl(\B(\beta_{\sim_\A})(t_1) \bigr), \ldots, \psi\bigl(\B(\beta_{\sim_\A})(t_m)\bigr)\bigr\> = \bigl\<\A(\beta)(t_1), \ldots, \A(\beta)(t_m)\bigr\> \] 
			and, consequently, also to 
				\[ 
					\bigl\< \B(\beta_{\sim_\A})(t_1), \ldots, \B(\beta_{\sim_\A})(t_m) \bigr\> \in P^\B \;\text{ if and only if }\; \bigl\< \A(\beta)(t_1), \ldots, \A(\beta)(t_m) \bigr\> \in P^\A ~. 
					\qedhere
				\]
	\end{description}%~\\[-5ex]
\end{proof}

%%%%%%%%%%%%%%%%%%%%%%%%%%%%%%%%%%%%%%%%%%%%%%%%%%%%%%%%%%
\begin{lemma}\label{lemma:EquisatisfiabilityFreeInstantiation}
	Let $N$ be a clause set in normal form that contains at least one constant symbol of the free sort.	
	Suppose there is a clause $C$ in $N$ which contains a free-sort variable $u$. Let the clause set $\widetilde{N}_u$ be constructed as follows:
	$\widetilde{N}_u := \bigl(N\setminus\{C\}\bigr)  \;\cup\; \bigl\{ C\subst{u}{c} \bigm| c\in\fconsts(N) \bigr\}$.
	The original clause set $N$ is satisfiable if and only if $\widetilde{N}_u$ is satisfiable. %Moreover, if the sets are satisfiable, then there exists a hierarchic model $\B$ satisfying both $N$ and $\widetilde{N}_u$.
\end{lemma}
\begin{proof}%[Proof of Lemma \ref{lemma:EquisatisfiabilityFreeInstantiation}]~
	While the ``only if''-part holds because of $\widetilde{N}_u$ containing only instances of clauses in $N$, the ``if''-part is slightly more complicated.  Let $\A$ be a hierarchic model of $\widetilde{N}_u$ so that $\S^\A = \fconsts(\widetilde{N}_u)/_\sim$ for some congruence relation $\sim$ on $\fconsts(N)$, and for every  free-sort constant symbol $c$ we have $c^\A = [c]_\sim$. But then $\A\models \bigl\{ C\subst{u}{c} \bigm| c\in\fconsts(N) \bigr\}$ entails $\A\models C$, since the set $\{[c]_\sim \mid c\in\fconsts(N)\}$ covers the whole domain $\S^\A$ due to $\fconsts(\widetilde{N}_u) = \fconsts(N)$. Hence, we obtain $\A\models N$.
\end{proof}

%%%%%%%%%%%%%%%%%%%%%%%%%%%%%%%%%%%%%%%%%%%%%%%%%%%%%%%%%%
\subsection*{Details related to the Decision Procedure and its Complexity (Section~\ref{section:Complexity})}
%%%%%%%%%%%%%%%%%%%%%%%%%%%%%%%%%%%%%%%%%%%%%%%%%%%%%%%%%%

\begin{lemma}\label{lemma:IterativeInstantiation}
	Let $N$ be a clause set in normal form that contains a clause $C$ in which a base-sort variable $x$ occurs. Further assume $\Ax_{\aconsts(N), \bconsts(N)} \subseteq N$ and for every $\alpha_{e+\varepsilon} \in \aconsts(N)$ we have $e\in\bconsts(N)$. Suppose $\widehat{N}_x$ is constructed from $N$ as described in Lemma \ref{lemma:EquisatisfiabilityBaseInstantiation} and variables have been renamed so that all clauses in $\widehat{N}_x$ are variable disjoint. 
	We observe the following facts:
	\begin{enumerate}[label=(\roman{*}), ref=(\roman{*})]
		\item\label{enum:IterativeInstantiation:One} ${\con_N} = {\con_{\widehat{N}_x}}$.
		\item\label{enum:IterativeInstantiation:Two} $\bconsts(N) = \bconsts(\widehat{N}_x)$ and $\fconsts(N) = \fconsts(\widehat{N}_x)$.
		\item\label{enum:IterativeInstantiation:Three} For every argument position class $[\<P,i\>]$ induced by $\con_{N}$ we have $\I_{[\<P,i\>], N} = \I_{[\<P,i\>], \widehat{N}_x}$.
		\item\label{enum:IterativeInstantiation:Four} $\widehat{N}_x$ is in normal form, $\Ax_{\aconsts(\widehat{N}_x), \bconsts(\widehat{N}_x)} \subseteq \widehat{N}_x$, and for every $\alpha_{e+\varepsilon} \in \aconsts(\widehat{N}_x)$ we have $e\in \bconsts(\widehat{N}_x)$.
	\end{enumerate}
\end{lemma}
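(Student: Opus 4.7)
The plan is to address the four claims separately, exploiting one key property of the construction of $\widehat{N}_x$: by the clause substitution convention in Definition~\ref{definition:BSRwithConstrSyntax}, instantiating a base-sort variable $x$ by a constant $c$ does not erase $x$ from the free part of $C$, because $\sigma|_{V_\S}$ leaves base-sort variables untouched in $\Gamma\to\Delta$; only $\Lambda$ is substituted, and a new equation $x\equals c$ is appended. The clauses in $N\setminus\{C\}$ appear in $\widehat{N}_x$ verbatim (up to variable renaming), and the added axioms in $\Ax_{\I_{\APC_N(x), N}, \bconsts(N)}$ have empty free parts.

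For (i), both inclusions of ${\con_N}={\con_{\widehat{N}_x}}$ reduce to inspecting the generators: each instance $C\subst{x}{c}$ has exactly the same free atoms as $C$; clauses from $N\setminus\{C\}$ are preserved; and axioms contribute no generators. Variable renaming does not affect which argument positions co-occur in a clause. For (ii), neither instantiation nor axioms introduce new base-sort or free-sort constants: the instantiation constants lie in $\I_{\APC_N(x), N}$, which consists of $\alpha_{-\infty}$, constants in $\bconsts(N)$, and $\alpha_{d+\varepsilon}$ with $d\in\bconsts(N)$; the axioms use constants only from $\I_{\APC_N(x), N}\cup\bconsts(N)$; and every constant of $C$ reappears in each of its instances.

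For (iii), I would show $\I_{[\<P,i\>], N}=\I_{[\<P,i\>], \widehat{N}_x}$ by tracing Definition~\ref{definition:BaseInstantiationPoints}. Only the new constraint $x\equals c$ in each instance can potentially introduce instantiation points, and since $x$ occurs exclusively at positions in $\APC_N(x)$, only that class is possibly affected. A case distinction on $c$ concludes: $c=\alpha_{-\infty}$ matches neither pattern of the definition and adds nothing; $c\in\bconsts(N)\cap\I_{\APC_N(x), N}$ triggers clause (i) and adds $c$, which was already present; $c=\alpha_{d+\varepsilon}\in\I_{\APC_N(x), N}$ triggers clause (ii) and adds $\alpha_{d+\varepsilon}$, again already present. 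A complementary check ensures that original constraints $x\triangleleft d$ in $C$, which after substitution become $c\triangleleft d$ and hence lose the variable, still have their contributions covered: each relevant $c\in\I_{\APC_N(x), N}$ gives rise to an instance whose added equation $x\equals c$ reinstates the required instantiation point.

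For (iv), normal form holds clause-by-clause: $x\equals c$ introduces a variable $x$ that still occurs in the free part of the instance; other base-sort variables in $\Lambda\subst{x}{c}$ were already legal in $C$ by normal form; axioms contain no variables at all. The invariant that $\alpha_{e+\varepsilon}\in\aconsts(\widehat{N}_x)$ implies $e\in\bconsts(\widehat{N}_x)$ follows because every newly appearing $\alpha_{e+\varepsilon}$ comes from $\I_{\APC_N(x), N}$ (or from axioms) with $e\in\bconsts(N)=\bconsts(\widehat{N}_x)$. The main obstacle is the axiom inclusion $\Ax_{\aconsts(\widehat{N}_x), \bconsts(\widehat{N}_x)}\subseteq \widehat{N}_x$. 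Using $\bconsts(\widehat{N}_x)=\bconsts(N)$ and decomposing $\aconsts(\widehat{N}_x)$ into $\aconsts(N)$ together with the $\alpha$-constants contributed by $\I_{\APC_N(x), N}$, the claim reduces to $\Ax_{\aconsts(N), \bconsts(N)}\subseteq N$ (by hypothesis) and $\Ax_{\I_{\APC_N(x), N}, \bconsts(N)}\subseteq \widehat{N}_x$ (by construction). The delicate piece is verifying that axioms of the form $d\equals c\to \alpha_{d+\varepsilon}\equals \alpha_{c+\varepsilon}$ in the fourth family, which syntactically mention $\alpha_{c+\varepsilon}$ for $c\in\bconsts(N)$ without requiring $\alpha_{c+\varepsilon}\in\I_{\APC_N(x), N}$, do not force further axioms beyond those already supplied; this requires matching every such $\alpha_{c+\varepsilon}$ with the corresponding axiom already present in one of the two supplied axiom sets.
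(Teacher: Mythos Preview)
Your outline for (i)--(iii) matches the paper's argument; the organization of (iii) differs slightly (you trace the new constraints $x\equals c$ directly rather than splitting on whether $[\langle P,i\rangle]=\APC_N(x)$), but the content is the same and both routes work.

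For (iv) you are in fact more scrupulous than the paper. The paper simply asserts $\aconsts(\widehat{N}_x)\subseteq\aconsts(N)\cup\I_{\APC_N(x),N}$ and reads the axiom inclusion off from that decomposition, without addressing the fourth axiom family at all. You correctly observe that axioms $d\equals c\to\alpha_{d+\varepsilon}\equals\alpha_{c+\varepsilon}$ in $\Ax_{\I_{\APC_N(x),N},\bconsts(N)}$ may mention $\alpha_{c+\varepsilon}$ for $c\in\bconsts(N)$ that need not lie in either $\aconsts(N)$ or $\I_{\APC_N(x),N}$, so that the paper's containment is not obvious. You flag this as ``delicate'' but do not close it; neither does the paper's proof. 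One small point you do omit that the paper supplies: from $\Ax_{\aconsts(N),\bconsts(N)}\subseteq N$ you need these axioms to land in $N\setminus\{C\}\subseteq\widehat{N}_x$, not merely in $N$. The paper argues this by noting that $C$ has nonempty free part (it is in normal form and contains the base-sort variable $x$, hence $x$ occurs in $\Gamma\to\Delta$), whereas axioms have empty free parts, so $C$ cannot itself be an axiom.
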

\begin{proof}
	We prove the different parts separately.
	\medskip
	
	\noindent Ad \ref{enum:IterativeInstantiation:One}:
		The relations $\con_N$ and $\con_{\widehat{N}_x}$ exclusively depend on the free parts of the clauses in $N$ and $\widehat{N}_x$, respectively, irrespective of variable names. Since $\I_{\APC_N(x), N}$ is nonempty, all free parts of clauses in $N$ do occur in $\widehat{N}_x$ (modulo variable renaming), and vice versa.
	\medskip
	
	\noindent Ad \ref{enum:IterativeInstantiation:Two}:
		$\bconsts(N) \subseteq \bconsts(\widehat{N}_x)$ and $\fconsts(N) \subseteq \fconsts(\widehat{N}_x)$ hold due to $\I_{\APC_N(x), N}$ being nonempty. $\bconsts(\widehat{N}_x) \subseteq \bconsts(N)$ is a consequence of $\I_{\APC_N(x), N} \cap \Omega_\LA \subseteq \bconsts(N)$. $\fconsts(\widehat{N}_x) \subseteq \fconsts(N)$ is true, because free parts of clauses are merely copied and variables renamed.
	\medskip
	
	\noindent Ad \ref{enum:IterativeInstantiation:Three}:
		We start with two observations:
		\begin{enumerate}[label=(\arabic{*}),ref=(\arabic{*})]
			\item\label{enum:proofComplexity:Three:One} For every clause $D \neq C$ in $N$, we find a clause $\widehat{D}$ in $\widehat{N}_x$ such that $D$ and $\widehat{D}$ are the same modulo variable renaming. Conversely, every $\widehat{D}$ in $\widehat{N}_x$ is either identical to a clause $D$ in $N$ modulo renaming of variables or it is equal to $C\subst{x}{c}$ up to variable renaming for some constant symbol $c$.
		
			\item\label{enum:proofComplexity:Three:Two} Regarding $C$ we find the instance $C\subst{x}{\alpha_{-\infty}}$ in $\widehat{N}_x$, possibly with renamed variables.
				Consequently, for every base-sort variable $y\neq x$ for which $C$ contains an atomic constraint $y\triangleleft s$ and a free atom $Q(\ldots, y, \ldots)$ with $y$ in the $j$-th argument position, $C\subst{x}{\alpha_{-\infty}}$ contains the constraint $y\triangleleft s$ and the free atom $Q(\ldots, y, \ldots)$ with $y$ in the $j$-th argument position -- again, modulo variable renaming.
				The converse also holds.
		\end{enumerate}
		
		That said, we distinguish two cases for every argument position pair $\<P,i\>$.
		\begin{description}
			\item If $\<P,i\>$ does not belong to the equivalence class $\APC_N(x)$, i.e.\ $\APC_N(x) \neq [\<P,i\>]$, then 
				we get $\I_{[\<P,i\>],N} = \I_{[\<P,i\>],\widehat{N}_x}$, since \ref{enum:proofComplexity:Three:One} and \ref{enum:proofComplexity:Three:Two} together entail $\I_{Q,j,N} = \I_{Q,j,\widehat{N}_x}$ for all $\<Q,j\> \in [\<P,i\>]$.
			
			\item If $[\<P,i\>] = \APC_N(x)$, then there is an argument position pair $\<Q,j\> \in [\<P,i\>]$ so that $C$ contains a free atom $Q(\ldots, x, \ldots)$ in which $x$ is the $j$-th argument. Moreover, $\widehat{N}_x$ is a superset of $\bigl\{ C\subst{x}{c} \bigm| c\in\I_{\APC_N(x), N} \bigr\}$ modulo renaming of variables. This entails $\I_{\APC_N(x), N}\setminus\{\alpha_{-\infty}\} \subseteq \I_{Q,j,\widehat{N}_x}$, and therefore, $\I_{\APC_N(x), N} \subseteq \I_{[\<P,i\>], \widehat{N}_x}$.
			 
			 On the other hand, assume there is an instantiation point in $\I_{[\<P,i\>], \widehat{N}_x}$ that is not in $\I_{\APC_N(x), N}$, i.e.\ there is a clause $\widehat{D}$ in $\widehat{N}_x$ of which a variable-renamed variant has not already been in $N$, and in which an atomic constraint $y\triangleleft s$ and a free atom $Q(\ldots, y, \ldots)$ occur with $y$ in the $j$-th argument position. By \ref{enum:proofComplexity:Three:One}, $\widehat{D}$ must be a variable-renamed variant of $C\subst{x}{c}$ for some constant symbol $c$. But according to \ref{enum:proofComplexity:Three:Two}, we must have $(y\triangleleft s) = (x \equals c)$ modulo variable renaming. But this constraint can only lead to an instantiation point that is contained in $\I_{\APC_N(x), N}$ -- a contradiction. Hence, we also have $\I_{[\<P,i\>], \widehat{N}_x} \subseteq \I_{\APC_N(x), N}$. 
			 
			 Put together, we just derived $\I_{[\<P,i\>], N} = \I_{[\<P,i\>], \widehat{N}_x}$ for this case. 
		\end{description}
	  	
	\noindent Ad \ref{enum:IterativeInstantiation:Four}:
		The construction of $\widehat{N}_x$ from $N$ preserves the normal form property for each clause individually, since we start from a clause in normal form, copy the free part of the clause and no new variables are introduced to the constraint part which do not also occur in the free part. Afterwards, consistent renaming of variables ensures pairwise variable disjointness of the clauses in $\widehat{N}_x$.
	
		$\Ax_{\aconsts(\widehat{N}_x), \bconsts(\widehat{N}_x)} \subseteq \widehat{N}_x$ is a consequence of $\Ax_{\aconsts(N), \bconsts(N)} \subseteq N\setminus \{C\}$ (true by assumption and the fact that $C$'s free part is not empty but the free parts of axioms are) and $\Ax_{\I_{\APC_N(x), N}, \bconsts(N)} \subseteq \widehat{N}_x$ (true by construction). Moreover, $\Ax_{\aconsts(\widehat{N}_x), \bconsts(\widehat{N}_x)} = \Ax_{\aconsts(N), \bconsts(N)} \cup \Ax_{\I_{\APC_N(x), N}, \bconsts(N)}$ is ensured by \ref{enum:IterativeInstantiation:Two} and the fact that\\ $\aconsts(\widehat{N}_x) \subseteq \aconsts(N)\cup\I_{\APC_N(x),N}$.
		
		Suppose there is a constant symbol $\alpha_{e+\varepsilon}$ in $\widehat{N}_x$ so that $e$ does not occur as constant symbol in $\widehat{N}_x$. Hence, $\alpha_{e+\varepsilon}$ must have been an instantiation point in $\I_{\APC_N(x), N}$. Due to \ref{enum:IterativeInstantiation:Two}, $e$ cannot occur in $N$ either. Consequently, $\alpha_{e+\varepsilon}$ has been introduced to $\I_{\APC_N(x), N}$ by the occurrence of an atomic constraint $y \equals \alpha_{e+\varepsilon}$ in some clause in $N$. But we assumed $e' \in \bconsts(N)$ for every $\alpha_{e'+\varepsilon} \in \aconsts(N)$ -- a contradiction.
\end{proof}

%%%%%%%%%%%%%%%%%%%%%%%%%%%%%%%%%%%%%%%%%%%%%%%%%%%%%%%%%%%%%%%%%%%%%%%%%%%%%%%%%%%
\subsubsection*{Detailed proof of Theorem \ref{theorem:Complexity}}
\begin{proof}
	We devise a na\"ive algorithm that decides a given problem instance as follows.
	As input we assume a finite clause set $N$ such that $\aconsts(N) = \emptyset$ and $\fconsts(N)\neq\emptyset$.
	\begin{enumerate}[label=(\Roman{*}), ref=\Roman{*}]
		\item\label{enum:DecisionAlgorithm:One} Transform the input clause set $N$ into normal form by applying three steps to every clause $C = \Lambda \;\|\; \Gamma\to\Delta$ in $N$ that contains exactly $k > 0$ distinct base-sort variables $x^*_1, \ldots, x^*_k$ occurring in $\Lambda$ but not in $\Gamma\to\Delta$:
			\begin{enumerate}[label=(\ref{enum:DecisionAlgorithm:One}.\Roman{*}), ref=(\ref{enum:DecisionAlgorithm:One}.\Roman{*})]
				\item\label{enum:DecisionAlgorithm:OneOne} Let $\bar{y}$ be the vector of all variables $x^*_j$ among $x^*_1, \ldots, x^*_k$ for which there is an atomic constraint $x^*_j \equals c_j$ in $\Lambda$, and let $\bar{c}$ denote the corresponding vector of constant symbols $c_j$. We replace $C$ by $\Lambda\subst{\bar{y}}{\bar{c}}\;\|\;\Gamma\to\Delta$ in $N$.
				\item\label{enum:DecisionAlgorithm:OneTwo} If $\Lambda$ contains an atomic constraint of the form $x^*_j \not\equals c_j$ after Step \ref{enum:DecisionAlgorithm:OneOne}, then replace $C$ in $N$ by two clauses $\Lambda', x^*_j < c_j \;\|\; \Gamma\to\Delta$ and $\Lambda', x^*_j > c_j \;\|\; \Gamma\to\Delta$, where $\Lambda' := \Lambda \setminus \{x^*_j \not\equals c_j\}$. Iterate this procedure on the newly added clauses until all atomic constraints concerning one of the variables $x^*_1, \ldots, x^*_k$ have the form $x^*_j \triangleleft c_j$ with $\triangleleft \in \{<, \leq, \geq, > \}$.
				\item\label{enum:DecisionAlgorithm:OneThree} Apply Fourier-Motzkin quantifier elimination to every base-sort variable among\\ $x^*_1, \ldots,x^*_k$ that is left in $N$.
			\end{enumerate}	
			We call the resulting clause set $N'$. In addition, rename variables so that all clauses in $N'$ are variable disjoint.
		
		\item\label{enum:DecisionAlgorithm:Two} Compute the equivalence classes with respect to the equivalence relation $\con_{N'}$ induced by $N'$.
		
		\item\label{enum:DecisionAlgorithm:Three} Compute the set $\bconsts(N')$ of all base-sort constant symbols that occur in $N'$.
		
		 	For every equivalence class $[\<P,i\>]_{\con_{N'}}$ with $P:\xi_1\times\ldots\times\xi_m \in \Pi$ such that $\xi_i = \R$, collect all instantiation points that are connected to it, i.e.\ compute $\I_{[\<P,i\>], N'}$. Whenever a constant symbol $\alpha_\t$ freshly enters the collection of instantiation points, construct the axiom set $\Ax_{\{\alpha_\t\}, \bconsts(N')}$ according to Definition \ref{definition:BaseInstantiationAxioms} and add all these axioms to a maintained axiom set $\Ax$.
		
		\item\label{enum:DecisionAlgorithm:Four} Compute the set $\fconsts(N')$ of all free-sort constant symbols occurring in $N'$.
		
		\item\label{enum:DecisionAlgorithm:Five} Perform an all-at-once instantiation process on $N'$ that leads to 
			$N'' := \Bigl\{ C\subst{\overline{x}, \overline{u}}{\overline{c}, \overline{d}} \Bigm| C \in N' \text{ and}$ $ 
				 \{x_1, \ldots, x_k\} = \vars(C) \cap V_\R \text{ and }
				 \{u_1, \ldots, u_\ell\} = \vars(C) \cap V_\S \text{ and }
				 c_i\in\I_{\APC_{N'}(x_i), N'} \text{ and }$ $d_j\in\fconsts(N') \text{ for all $i$ and $j$} \Bigr\}$.

		\item\label{enum:DecisionAlgorithm:Six} 
			Nondeterministically construct a hierarchic interpretation $\A$ in three steps:
			\begin{enumerate}[label=(\ref{enum:DecisionAlgorithm:Six}.\Roman{*}), ref=(\ref{enum:DecisionAlgorithm:Six}.\Roman{*})]
				\item\label{enum:DecisionAlgorithm:SixOne} Nondeterministically choose a total ordering $\preceq_\A$ on the constant symbols in\\ $\bconsts(N'')\cup \aconsts(N'')$ so that for all $c,d \in \bconsts(N'')\cap\Real$ it holds $c\preceq_\A d$ if and only if $c \leq d$.
					Based on this ordering, construct (in a deterministic fashion) a mapping $\mu_\A : \bconsts(N'')\cup\aconsts(N'')\to\Real$ so that
					\begin{itemize}
						\item for all $c\in \bconsts(N'')\cap\Real$ we get $\mu_\A(c) = c$, and
						\item for all $d_1, d_2 \in \bconsts(N'')\cup\aconsts(N'')$ it holds $\mu_\A(d_1) \leq \mu_\A(d_2)$ if and only if $d_1 \preceq_\A d_2$.
					\end{itemize}
					Set $c^\A := \mu_\A(c)$ for every $c\in\bconsts(N'')$.
				
				\item\label{enum:DecisionAlgorithm:SixTwo} Nondeterministically choose a mapping $\nu_\A : \fconsts(N'') \to \fconsts(N'')$.
					
					While $\nu_\A$'s image shall induce the free domain $\S^\A := \{ \nu_\A(c) \mid c \in \fconsts(N'')\}$, the value $\nu_\A(c)$ shall be assigned to $c$ under $\A$ for every $c\in\fconsts(N'')$.

				\item\label{enum:DecisionAlgorithm:SixThree}	Let $\At(N'')$ denote the set of nonequational ground atoms induced by the essentially ground clause set $N''$, formally
						$\At(N'') := \bigl\{ A\subst{\bar{x}}{\bar{c}} \bigm| \text{ there is a non-equational free}$ 
						$\text{atom $A$ in the free part $\Gamma\to\Delta$ of a clause $(\Lambda, x_1\equals c_1, \ldots, x_k\equals c_k\;\|\;\Gamma\to\Delta)\in N''$}$\\
						$\text{with$\{x_1, \ldots, x_k\} = \vars(A)$} \bigr\}$.
					Given $\At(N'')$, we construct the set $\hAt(N'')$ by syntactically replacing every base-sort constant symbol $c$ in $\At(N'')$ by $\mu_\A(c)$ and every free-sort constant symbol $e$ by $\nu_\A(e)$.
					
					Nondeterministically choose a subset $\hAt_\A(N'')$ of $\hAt(N'')$ that represents the atoms in $\hAt(N'')$ which shall be true under $\A$, i.e.\ construct a Herbrand model over the atoms in $\hAt(N'')$.
			\end{enumerate}

		\item\label{enum:DecisionAlgorithm:Seven} Check whether $\A$ is a hierarchic model of $N''\cup \Ax$.
				\\[-2ex]\strut\hfill$\Diamond$
	\end{enumerate}
	
	Regarding the correctness of the algorithm, there are two crucial points which we need to address, namely (a) the equisatisfiability of $N$ and $N'$, and (b) the equisatisfiability of $N'$ and $N''\cup\Ax$. 

	\noindent Ad (a):
		It is straightforward to check that \ref{enum:DecisionAlgorithm:OneOne} and \ref{enum:DecisionAlgorithm:OneTwo} lead to equisatisfiable clause sets with respect to the standard semantics of linear arithmetic.
		
		The Fourier-Motzkin elimination step in \ref{enum:DecisionAlgorithm:OneThree} works for existentially quantified real-valued variables (cf.\ \cite{Dantzig1973}, \cite{Kroening2008}). Given a clause $C = \Lambda \;\|\; \Gamma\to\Delta$ that contains a base-sort variable $x$ occurring in $\Lambda$ but not in $\Gamma\to\Delta$, let $\bar{z}$ denote the vector of all variables in $\Lambda$ except for $x$. Recall that $C$ can be informally understood as $\forall \bar{z}\forall x \bigl(\bigl(\bigwedge \Lambda \wedge \bigwedge \Gamma\bigr) \to \bigvee \Delta\bigr)$. Since $x$ does neither occur in $\Delta$ nor in $\Gamma$ and since any formula $\phi\to\psi$ is equivalent to $\neg\phi \vee \psi$, we can equivalently write $\forall \bar{z} \bigl(\bigl((\exists x\bigwedge \Lambda) \wedge \bigwedge \Gamma\bigr) \to \bigvee \Delta\bigr)$.  Now Fourier-Motzkin variable elimination can be applied to the part $\exists x\bigwedge \Lambda$ to transform it into a formula $\bigwedge\Lambda'$ over linear arithmetic constraints (each of the form $c\triangleleft d$ or $y\triangleleft e$) so that $\bigwedge\Lambda'$ is equivalent to $\exists x\bigwedge\Lambda$ with respect to the standard semantics of linear arithmetic, $\Lambda'$ does not contain $x$ anymore and all variables and constant symbols in $\Lambda'$ have already occurred in $\Lambda$ (cf.\ \cite{Dantzig1973}). 
		
		Treating all base-sort variables that occur in $\Lambda$ but not in $\Gamma\to\Delta$ as described above, we finally obtain a clause $\Lambda'' \;\|\; \Gamma\to\Delta$ that is equivalent to $\Lambda \;\|\; \Gamma\to\Delta$ and is in normal form.
		\medskip		
		
	\noindent Ad (b):
		The construction of clause set $N''$ done in Step (\ref{enum:DecisionAlgorithm:Five}) resembles $|\vars(N')\cap V_\R|$ consecutive applications of Lemma \ref{lemma:EquisatisfiabilityBaseInstantiation} to the base-sort variables $x_1, \ldots, x_k$ followed by $|\vars(N')\cap V_\S|$ consecutive applications of Lemma \ref{lemma:EquisatisfiabilityFreeInstantiation} to the  free-sort variables $u_1,\ldots,u_\ell$ in order to instantiate all variables that occur in $N'$.
	
		For an arbitrary clause $C$ in $N'$ we have $\fconsts\bigl(C\subst{\bar{x}}{\bar{c}}\subst{u_1}{d_1}\ldots\subst{u_j}{d_j}\bigr) \subseteq \fconsts(N')$
		for every $j$, $0\leq j \leq \ell$. Hence, Lemma \ref{lemma:EquisatisfiabilityFreeInstantiation} entails equisatisfiability of $N'' \cup \Ax$ and the intermediate set $N''' \cup \Ax$ where
		$N''' := \bigl\{C\subst{\bar{x}}{\bar{c}} \bigm|
			 C \in N' \text{ and }
			 \{x_1, \ldots, x_k\} = \vars(C) \cap V_\R \text{ and } 
			 c_i \in\I_{\APC_{N'}(x_i), N'}$ $\text{ for all $i$} \bigr\}$.
			 
		It remains to show equisatisfiability of $N'$ and $N'''\cup \Ax$. 
		In order to do so, we first reduce the problem to showing equisatisfiability of $N'\cup \Ax$ and $N'''\cup \Ax$: If $N'$ is satisfiable, then so is $N'\cup\Ax$ by Proposition \ref{proposition:ImproperAxiomSatisfiability}, since $N'$ does not contain constant symbols from $\Omega_\LA^\alpha$. Conversely, $N'$ is obviously satisfiable whenever $N'\cup\Ax$ is.
	
		Let $x_1, \ldots, x_k$ be a list of pairwise distinct base-sort variables such that $\{x_1, \ldots, x_k\} := \vars(N')\cap V_\R = \vars(N'\cup\Ax)\cap V_\R$. We set $N'_0 := N'$ and for every $j$, $1\leq j\leq k$, we define $N'_j := \bigl\{C\subst{x_j}{c_j} \;\bigm|\; C\in N'_{j-1} \text{ and } c_j\in\I_{\APC_{N'}(x_j), N'}\bigr\}$.
		
		As the $x_1, \ldots, x_k$ are pairwise distinct and since the involved substitution operations only substitute variables with constant symbols, we can equivalently write	
		$N'_j =$\\ $\bigl\{C\subst{x_1}{c_1}\ldots\subst{x_j}{c_j} \bigm|  C \in N' \text{ and }  c_i \in\I_{\APC_{N'}(x_{i}), N'} \text{ and}$ $\text{$1\leq i\leq j$} \bigr\}$
		for every $j$, $1\leq j\leq k$. Variable disjointness of the clauses in $N'$ entails $N'_k = N'''$, since in this case the iterative substitution in the construction of $N'_k$ yields the same result as simultaneous substitution in the construction of $N'''$ does.
		
		Consider the sequence $N'_0, N'_1, \ldots, N'_k$ of clause sets, for which we know $N'_0 = N'$ and $N'_k = N'''$. If we rename the variables in each set in the sequence such that the clauses are variable disjoint, we can apply Lemma \ref{lemma:EquisatisfiabilityBaseInstantiation} $k$ times to conclude equisatisfiability of $N'_0\cup\Ax$, $N'_1\cup\Ax$ and so on up to $N'_k\cup\Ax$. To support this claim, we need to show that the prerequisites of Lemma \ref{lemma:EquisatisfiabilityBaseInstantiation} are fulfilled along the sequence. They certainly are for the starting point $N'_0\cup\Ax$, since we assume $N'$ to be in normal form and to not contain constant symbols $\alpha_\t$. For the rest of the sequence, we invoke Lemma \ref{lemma:IterativeInstantiation}, since it ensures ${\con_{N'_0\cup\Ax}} = \ldots = {\con_{N'_k\cup\Ax}}$ and $\I_{[\<P,i\>],N'_0} = \ldots = \I_{[\<P,i\>],N'_k}$ for all equivalence classes $[\<P,i\>]$ induced by $\con_{N'_0\cup\Ax}$.
		
		The following observations justify why it is legitimate for the algorithm described in steps (\ref{enum:DecisionAlgorithm:One}) to (\ref{enum:DecisionAlgorithm:Seven}) to mostly ignore the axioms in $\Ax$.
		\begin{enumerate}[label=(b.\arabic{*}), ref=(\ref{enum:proofComplexity:Two}.\arabic{*})]
			\item ${\con_{N'}} = {\con_{N'\cup\Ax}}$, because the relation $\con_{N'\cup\Ax}$ exclusively depends on the free parts of clauses in $N'\cup\Ax$, but the clauses in $\Ax$ do not have free parts.
			\item $\bconsts(N') = \bconsts(N'\cup\Ax)$, since the definition of $\Ax$ entails $\bconsts(\Ax)\subseteq \bconsts(N')$.
			\item $\fconsts(N') = \fconsts(N'\cup\Ax)$ holds because clauses in $\Ax$ do not contain constant symbols of the free sort.
			\item $\I_{[\<P,i\>],N'} = \I_{[\<P,i\>],N'\cup\Ax}$, since only atomic constraints that involve one base-sort variable contribute to the set of instantiation points, but $\Ax$ is a set of ground clauses.
		\end{enumerate}
			
		Altogether, the iterative application of Lemma \ref{lemma:EquisatisfiabilityBaseInstantiation} shows that $N'\cup \Ax$ has a hierarchic model if and only if $N'''\cup\Ax$ has one. But as we have already argued above, this entails the equisatisfiability of $N'$ and $N'''\cup\Ax$ and even $N''\cup\Ax$.
		\medskip
		
	Next, we investigate the running time of the presented algorithm. In order to do so, we take a look at every step of the algorithm individually.
	\smallskip

	\noindent Ad (\ref{enum:DecisionAlgorithm:One}):
		While the substitution operations in Step \ref{enum:DecisionAlgorithm:OneOne} take a total amount of time that is polynomial in $\len(N)$, the length of the clause set does not grow.
		The second step, however, may blow up the length of the clause set exponentially. Every clause $\Lambda\;\|\;\Gamma\to\Delta$ might be copied up to $2^{|\Lambda|}$ times, since $\Lambda$ can contain at most $|\Lambda|$ constraints of the form $x_j\not\equals c_j$. Hence, Step \ref{enum:DecisionAlgorithm:OneTwo} increases the length of the clause set to not more than $\len(N)\cdot 2^{\len(N)}$ and in the worst case takes time polynomial in that new length.
		Given a multiset $\Lambda$ of atomic constraints, in which variables $x_1, \ldots, x_k \in \vars(\Lambda)$ are supposed to be eliminated one after another by the Fourier-Motzkin procedure, we can partition $\Lambda$ into $k+1$ parts $\Lambda_0, \Lambda_1, \ldots, \Lambda_k$ so that for all $i > 0$ the part $\Lambda_i$ contains all atomic constraints that involve $x_i$ and only those, and $\Lambda_0 := \Lambda \setminus (\Lambda_1 \cup \ldots \cup \Lambda_k)$. It turns out that eliminating a variable $x_i$ from $\Lambda$ results in a multiset $(\Lambda \setminus \Lambda_i) \cup \widehat{\Lambda}_i$ where $|\widehat{\Lambda}_i|$ is at most $|\Lambda_i|^2$. Hence, after eliminating all $x_i$, we end up with a multiset of atomic constraints of size $|\Lambda_0| + |\Lambda_1|^2 + \ldots + |\Lambda_k|^2 \leq |\Lambda|^2$. Consequently, Step \ref{enum:DecisionAlgorithm:OneTwo} may increase the length of the clause set at most quadratically. The time taken for this step is polynomial in that new length.
		Overall, we end up with a length of at most $\len(N)^2 \cdot 2^{2\cdot\len(N)} \leq \len(N)^{3\cdot\len(N)}$ for $N'$ (recall that we assumed $\len(N) \geq 2$).

	\noindent Ad (\ref{enum:DecisionAlgorithm:Two}):
		This step can be performed in time that is polynomial in the length of $N'$ using an efficient union-find data structure.
		
	\noindent Ad (\ref{enum:DecisionAlgorithm:Three}):
		The computation of $\bconsts(N')$ and the collection of all relevant instantiation points takes time polynomial in the length of $N'$. 
		The set of instantiation points for any base-sort variable is a subset of $\bconsts(N') \cup \{\alpha_{d+\varepsilon} \;\mid\; d\in\bconsts(N')\} \cup \{\alpha_{-\infty}\}$. It is worthwhile to note that Steps \ref{enum:DecisionAlgorithm:OneOne} to \ref{enum:DecisionAlgorithm:OneThree} do not lead to a change in the number of instantiation points, since they only modify atomic constraints that do not contribute to instantiation points. The reason is that the variables $x_1, \ldots, x_k$ addressed in Step (\ref{enum:DecisionAlgorithm:One}) do not occur in the free parts of the modified clause. 
		For every argument position pair $\<P,i\>$, there are at most $\len(N)$ instantiation points, as every atomic constraint $y\triangleleft c$ in $N$ can induce at most one instantiation point (either $c$ of $\alpha_{c+\varepsilon}$). To account for $\alpha_{-\infty}$: if there is a base-sort variable $y$ to be instantiated in $N'$ at all, then $N'$ must also contain a free atom $Q(\ldots,y,\ldots)$ which did already occur in $N$, and which thus also contributes to the length of $N$. In addition, we have $\bconsts(N') \subseteq \bconsts(N)$, leading to $|\bconsts(N')| \leq \len(N)$.
		The construction of the required axiom set $\Ax$ can be done in polynomial time in $\len(N') + \len(\Ax)$, where we can bound the length of the axiom set from above by $2\cdot2\cdot\len(N) + 2\cdot 4\cdot\len(N)^2 \leq 10 \cdot \len(N)^2$.
				
	\noindent Ad (\ref{enum:DecisionAlgorithm:Four}):
		The extraction of $\fconsts(N')$ does not take longer than polynomial time in the length of $N'$.

	\noindent Ad (\ref{enum:DecisionAlgorithm:Five}):
		At first, we consider each clause $C = \Lambda\;\|\;\Gamma\to\Delta$ in $N'$ separately. Since $|\fconsts(N')|$ is upper bounded by $\len(N)$ (every free-sort constant symbol in $N'$ did already occur in $N$), instantiation of the free-sort variables yields a factor of at most\\ $\len(N)^{|\vars(\Gamma\to\Delta) \cap V_\S|}$. We have already argued -- when looking at Step (\ref{enum:DecisionAlgorithm:Three}) -- that the number of instantiations points for each base-sort variable is bounded from above by $\len(N)$. Hence, instantiation of all base-sort variables in the clause adds a factor of at most\\ $\len(N)^{|\vars(\Gamma\to\Delta) \cap V_\R|}$.
		There are at most $\len(\Gamma\to\Delta)$ different variables in $C$ that need to be instantiated, and as $\Gamma\to\Delta$ did already occur in $N$ (modulo variable renaming), we have $\len(\Gamma\to\Delta) \leq \len(N)$.  				   
		When instantiating a clause, the constraint part may increase in length, namely by at most double the number of instantiated variables. In the worst case, we thus get triple the length of the original, e.g.\ in case of instantiating the clause $\|\to P(x)$ with the default instantiation point $\alpha_{-\infty}$ we obtain $x\equals \alpha_{-\infty}\;\|\;\to P(x)$.
		In total, instantiating a single clause $C = \Lambda\;\|\;\Gamma\to\Delta$ taken from $N'$ leads to a clause set of length at most  $3\cdot\len(C) \cdot \len(N)^{\len(\Gamma\to\Delta)}$. Consequently, we can upper bound the length of the fully instantiated clause set $N''$ by $3\cdot\len(N') \cdot \len(N)^{\len(N)} \;\leq\;
			3\cdot  \len(N)^{4\cdot\len(N)}$.
		Instantiating the set of clauses needs only time that is bounded by some polynomial in $\len(N'')$.

	\noindent Ad (\ref{enum:DecisionAlgorithm:Six}):
		The construction of $\A$ can be done nondeterministically in time that is bounded from above by some polynomial in the length of $N''$.

	\noindent Ad (\ref{enum:DecisionAlgorithm:Seven}):
		The check whether $\A$ satisfies $N''\cup\Ax$ can be performed in a deterministic fashion in time polynomial in the length of $N''\cup\Ax$.
		\smallskip

		Taking all the above results into account, we can upper bound the running time of the algorithm by some polynomial in $\len(N)^{4\cdot\len(N)}$. Hence, there is some constant $\mathfrak{c} \geq 4$ such that the nondeterministic running time lies in $\O\bigl(\len(N)^{\mathfrak{c}\cdot \len(N)}\bigr)$. Consequently, the problem of deciding whether a finite clause set $N$ is satisfiable, lies in NEXPTIME.
\end{proof}

%%%%%%%%%%%%%%%%%%%%%%%%%%%%%%%%%%%%%%%%%%%%%%%%%%%%
%%%%%%%%%%%%%%%%%%%%%%%%%%%%%%%%%%%%%%%%%%%%%%%%%%%%
%%%%%%%%%%%%%%%%%%%%%%%%%%%%%%%%%%%%%%%%%%%%%%%%%%%%
%%%%%%%%%%%%%%%%%%%%%%%%%%%%%%%%%%%%%%%%%%%%%%%%%%%%

\newpage

\end{document}